\documentclass[conference]{IEEEtran}
\IEEEoverridecommandlockouts
\usepackage{amsfonts}
\usepackage{enumerate,fancybox,url,ascmac,here,comment,wrapfig,cite}
\usepackage{amsmath,amssymb,amsfonts,bm}
\usepackage{algorithmic,algorithm}
\usepackage{graphicx}
\usepackage{textcomp}
\usepackage{mathtools}
\usepackage{xcolor}
\usepackage{ascmac}
\usepackage{physics}
\usepackage{tikz}
\usepackage{mathrsfs}
\usepackage[inline]{enumitem}

\newtheorem{theo}{Theorem}
\newtheorem{lem}{Lemma}
\newtheorem{coro}{Corollary}

\newtheorem{exam}{Example}

\def\BibTeX{{\rm B\kern-.05em{\sc i\kern-.025em b}\kern-.08em
    T\kern-.1667em\lower.7ex\hbox{E}\kern-.125emX}}

\newcommand{\mat}[1]{\mathbf{#1}}

\newcommand{\state}[2]{S(\C^{#1 \otimes #2})}
\newcommand{\QU}[1]{\mathcal{#1}}

\newcommand{\Z}{\mathbb{Z}}
\newcommand{\PZ}{\mathbb{Z}^+}
\newcommand{\C}{\mathbb{C}}
\newcommand{\F}{\mathbb{F}}
\newcommand{\setint}[1]{[\hspace{-0.5mm}[#1]\hspace{-0.5mm}]}
\newcommand{\subint}[2]{\binom{\setint{#1}}{#2}}

\begin{document}

\title{Decoding Algorithm to Composite Errors Consisting of Deletions and Insertions\\ for Quantum Deletion-Correcting Codes\\ Based on Quantum Reed-Solomon Codes}

\author{
  \IEEEauthorblockN{Koki Sasaki, Ken Nakamura, and Takayuki Nozaki}
  \IEEEauthorblockA{
    Dept.\ of Informatics, Yamaguchi University, JAPAN \\
    Email: {\tt \{f006wbw, tnozaki\}@yamaguchi-u.ac.jp}
  }
}

\maketitle

\begin{abstract}
  This paper focuses on Hagiwara codes,
  which are quantum deletion-correcting codes constructed by 
  the quantum Reed-Solomon codes.
  Although Hagiwara codes
  can correct
  composite errors consisting of deletions and insertions,
  an efficient decoding algorithm to such errors
  remains
  an open problem. 
  In this paper,
  we provide a decoding algorithm to such errors
  for Hagiwara codes.
\end{abstract}

\section{Introduction}
In quantum information theory,
information is
represented by the states of quantum systems.
A quantum insertion error
and
a quantum deletion error \cite{leahy2019quantum}
change
the information.
Here,
a quantum deletion error
loses
a qubit of a multi-qubit system
and
a quantum insertion error
adds
a new qubit to a multi-qubit system.

Nakayama and Hagiwara constructed
the first quantum single-deletion-correcting code \cite{nakayama2020first}
in $2020$.
Following this study,
quantum deletion codes
\cite{hagiwara2020four,matsumoto2022constructions,hagiwara2023quantum,hagiwara2025quantum}
have been constructed
and
its decoding algorithms
\cite{hagiwara2020four,matsumoto2022constructions,hagiwara2023quantum,hagiwara2025quantum,hagiwara2021four,nakamura2025multiple,DBLP:conf/isita/SasakiN24,nakamura2024decoding,it2025-3}
have been proposed.
Although these codes can correct
composite errors consisting of deletions and insertions \cite{nakamura2026insertion},
most algorithms
correct to only deletions or only insertions
and
there are a few studies \cite{nakamura2024decoding,it2025-3}
to composite errors
consisting of deletions and insertions.

This paper focuses on
{\it Hagiwara codes} \cite{hagiwara2023quantum,hagiwara2025quantum},
which are quantum $t$-deletion-correcting codes constructed by 
inserting the {\it marker} qubits
to the quantum Reed-Solomon (RS) codes \cite{grassl1999quantum}.
The marker helps to estimate the range of position of deleted qubits
and
transforms to a recoverable state by the quantum RS code.
For these codes,
we proposed
a decoding algorithm to
quantum $t$-insertion errors \cite{DBLP:conf/isita/SasakiN24}
and
that to quantum single-insertion and single-deletion error \cite{it2025-3}.

This paper propose
a decoding algorithm
to classical $t_d$-deletion errors and classical $t_i$-insertion errors
for {\it classical Hagiwara codes},
which
is the classical counterpart of Hagiwara code, 
by generalizing \cite{it2025-3},
where $t_d+ t_i \leq t$ and $t_d,t_i \neq 0$.
Then,
this paper extends this classical algorithm to quantum.

\section{Preliminary}
\subsection{Notations}
Let $\PZ$ and $\C$ be 
the sets of all positive integers and complex numbers, respectively.
Let $\F_q$ be the finite field of order $q$.
We denote
the $n$-dimensional vector space over the field $\F_q$ (resp.~$\C$)
by $\F_q^n$ (resp.~$\C^n$).
Let 
$\mat{M}^{\dagger}$ be
the adjoint of a matrix $\mat{M}$
and
let
$\mat{M}_1 \otimes \mat{M}_2$ be
the Kronecker product of matrices $\mat{M}_1$ and $\mat{M}_2$.
For $n,m,t \in \PZ$,
define
$\setint{m,n} := \{ x \in \PZ \mid m \leq x \leq n \}$,
$\setint{n} := \setint{1,n}$,
$\subint{n}{t} := \{ A \subset \setint{n} \mid |A| = t \}$,
and
$\subint{n}{\leq t} := \{ A \subset \setint{n} \mid |A| \leq t \}$.

To simplify the notation,
we write a sequence $\bm{x} = (x_1, x_2, ..., x_n) \in \F_q^n$
as $x_1 x_2 ... x_n$.
We denote the subsequence $( x_{i_1}, x_{i_2}, ..., x_{i_m} )$ of $\bm{x} = (x_1, x_2, ..., x_n)$
by $\bm{x}_{\{ i_1, i_2, ..., i_m \}}$.

Define
$\ket{0} := \binom{1}{0}$
and
$\ket{1} := \binom{0}{1}$.
For $\bm{x} = (x_1, x_2, ..., x_n) \in \F_2^n$,
define
$\ket{\bm{x}} := \ket{x_1} \otimes \ket{x_2} \otimes \cdots \otimes \ket{x_n} \in \C^{2^n}$.
In this paper,
we denote $\C^{2^n}$
by $\C^{2 \otimes n}$.
Given a complex column vector $\ket{\psi}$,
we denote $\ket{\psi}^\dagger$
by $\bra{\psi}$.

An $n$ qubit system can be written as
a $2^n$-dimensional density matrices
$\rho = \sum_{ \bm{x},\bm{y} \in \F_2^n } m_{\bm{x},\bm{y}} \ket{\bm{x}}\bra{\bm{y}}$,
where
$m_{\bm{x},\bm{y}} \in \C$
and
$\sum_{ \bm{x},\bm{y} \in \F_2^n } m_{\bm{x},\bm{y}} = 1$.
Here,
a density matrix
is
a positive-semidefinite matrix
such that
the trace equals $1$.
Let $\state{2}{n}$ be
the set of all $2^n$-dimensional density matrix.
A quantum state $\rho$ is pure,
if $\rho$ has rank $1$.
Otherwise $\rho$ is mixed.
When $\rho \in \state{2}{n}$ is pure,
there exists a vector $\ket{\phi} \in \C^{2 \otimes n}$
such that $\rho = \ket{\phi}\bra{\phi}$.
Then,
we denote $\rho$, by $\ket{\phi}$.

\subsection{Classical Substitution/Erasure/Deletion/Insertion Error}
A classical substitution error
changes
a symbol of a sequence
to another.
In particular,
a classical substitution error is called
a classical erasure error
if
it changes to the erasure symbol $?$.
For example,
the erasure error at $\{ 3 \}$
changes $01110$ to $01?10$.

A classical deletion error
loses
a symbol of a sequence.
For $\bm{x} \in \F_q^n$,
define classical $t$-deletion errors $D_J$ at $J \in \subint{n}{t}$ as
$D_J(\bm{x}) := \bm{x}_{\setint{n} \setminus J}$.

A classical insertion error
adds
a new symbol to a sequence.
For $\bm{x} \in \F_q^n$,
define classical $t$-insertion errors $I_K$ at $K \in \subint{n+t}{t}$ as
$I_K(\bm{x}) := \{ \bm{y} \in \F_q^{n+t} \mid D_K(\bm{y}) = \bm{x} \}$.

\subsection{Quantum Deletion/Insertion/Erasure Error}
A quantum deletion error
loses
a qubit of a multi-qubit system.
For $i \in \setint{n}$ and $\mat{M} = \sum_{ \bm{x},\bm{y} \in \F_2^n } m_{\bm{x},\bm{y}} \ket{\bm{x}}\bra{\bm{y}}$,
define the partial trace $\text{tr}_i$ as 
$\text{tr}_i(\mat{M}) = \sum_{ \bm{x},\bm{y} \in \F_2^n } m_{\bm{x},\bm{y}} \text{tr}(\ket{x_i}\bra{y_i}) \ket{D_{\{i\}}(\bm{x})}\bra{D_{\{i\}}(\bm{y})}$,
where
$\text{tr}(\cdot)$ is the trace of a square matrix.
For $\rho \in \state{2}{n}$,
define quantum $t$-deletion errors $\QU{D}_J$
at $J = \{ j_1, j_2, ..., j_t \} \in \subint{n}{t}$ with $j_1 < j_2 < \cdots < j_t$ as
$\QU{D}_J(\rho) := \text{Tr}_{j_1} \circ \text{Tr}_{j_2} \circ \cdots \text{Tr}_{j_t}( \rho )$.

A quantum insertion error
adds
a new qubit to a multi-qubit system.
For $\rho \in \state{2}{n}$,
define quantum $t$-insertion errors $\QU{I}_K$ at $K \in \subint{n+t}{t}$ as
$\QU{I}_K(\rho) := \{ \rho' \in \state{2}{(n+t)} \mid \QU{D}_K(\rho') = \rho \}$.

\section{Hagiwara Codes \cite{hagiwara2023quantum,hagiwara2025quantum}}
\label{sec:hagi_code}
      {\it Hagiwara code} \cite{hagiwara2023quantum,hagiwara2025quantum}
is
a multiple deletion-correcting quantum code,
which is constructed
by inserting the {\it marker} qubits
to the quantum RS codes \cite{grassl1999quantum}.
The marker helps to estimate the range of position of deleted qubits
and
transforms quantum $t$-deletion errors to quantum $t$-erasure errors 
\cite{hagiwara2023quantum,hagiwara2025quantum}.
Section~\ref{subsec:C_del_code} gives
the classical counterpart of Hagiwara code.
Section~\ref{subsec:def_hagi_code}
introduces the quantum RS codes and Hagiwara codes.

\subsection{Classical Deletion-Correcting Codes by the Marker}
\label{subsec:C_del_code}
\subsubsection{Binary Expansion}
Let $B = \{ b_1, .., b_E \} \subset \F_{p^E}$
be a basis of $\F_{p^E}$ over $\F_p$,
i.e.,
$\F_{p^E} = \{ \sum_{i=1}^E a_i b_i \mid a_i \in \F_p \}$.
For such a basis $B$,
the dual basis $B^{\perp}$ of $B$
is defined as
$B^{\perp} := \{ b'_1, .., b'_E \} \subset \F_{p^E}$
such that
$\text{Tr}( b_i b_j' ) = \delta_{i,j}$ ($i,j = 1,2,...,E$),
where
$\text{Tr}(\beta) := \sum_{i=0}^{E-1} \beta^{p^i}$
and
$\delta_{i,j}$ is the Kronecker delta.

A basis $B$ is self-dual,
if $B^{\perp} = B$.
By using a self-dual basis $B = \{ b_1, b_2, ..., b_E \}$ of $\F_{2^E}$ over $\F_2$,
any element $\beta = \sum_{i=1}^E a_i b_i \in \F_{2^E}$ can be expanded as
a binary sequence $( a_1, a_2, ..., a_E ) \in \F_2^{E}$.
Then,
we denote this binary sequence by $B(\beta) := ( a_1, a_2, ..., a_E )$.
Define the binary expansion $B(C)$ of a linear code $C \subset \F_{2^E}^N$ as follows:
\begin{align*}
  B(C)
  &:=
  \{
  \big(
  \bm{c}_1,
  ...,
  \bm{c}_N
  \big)
  \in
  \F_2^{EN}
  \mid
  \bm{c}_i = B(c'_i),
  \bm{c}' = (c'_i) \in C
  \}.
\end{align*}

\subsubsection{Classical Counterpart of Hagiwara Codes}
The sequence $\bm{0}^{(t)}$ (resp.~$\bm{1}^{(t)}$) represents
the all-zero (resp.~all-one) sequence of length $t$.
The concatenation $\bm{0}^{(t)}\bm{1}^{(t)}$ of $\bm{0}^{(t)}$ and $\bm{1}^{(t)}$
is called the {\it marker}.
For a classical $t$-erasure-correcting code $C$,
the codeword of classical Hagiwara code
corresponding to $( \bm{c}_1, \bm{c}_2, ..., \bm{c}_N ) \in B(C)$
is
\begin{align}
  \bm{c}
  =
  \bm{c}_1 \bm{0}^{(t)}\bm{1}^{(t)}
  \bm{c}_2 \bm{0}^{(t)}\bm{1}^{(t)}
  \cdots
  \bm{c}_N \bm{0}^{(t)}\bm{1}^{(t)}.
  \label{eq:C_hagi_codeword}
\end{align}
In this paper,
we call $\bm{c}_b$ the {\it $b$th block}.
This code corrects $t$-deletion error
by a similar approach to Hagiwara codes.
This code is also
classical $t_d$-deletion and $t_i$-insertion-correcting code
since
classical $t$-deletion-correcting codes
also correct $t_d$-deletions and $t_i$-insertions \cite{levenshtein1966binary},
where $t_d + t_i \leq t$.

\subsubsection{Classical Block-Substitution/Erasure}
We call
a transformation to
a sequence $\bm{c}'_b \in \F_2^E$ from $\bm{c}_b$
a {\it classical block-substitution} at $b$,
e.g.,
$\bm{c}_b = 0110 \mapsto \bm{c}'_b = 1010$.
In particular,
we call a classical block-substitution at $b$
a {\it classical block-erasure error} at $b$
if
the receiver knows the error position $b$.

\subsection{Definition of Hagiwara Codes \cite{hagiwara2023quantum,hagiwara2025quantum}}
\label{subsec:def_hagi_code}
Let $C_1$ be an $(N,K_1)$-RS code over $\F_{2^E}$
and 
$C_2$ be an $(N,K_2)$-RS code over $\F_{2^E}$
such that $C_2 \subset C_1$.
Then,
the Quantum RS code $\QU{R}$ \cite{grassl1999quantum}
is defined
by the Calderbank–Shor–Steane (CSS) code \cite{calderbank1996good,steane1996multiple} of
$B(C_1)$ over $B(C_2)$,
i.e.,
\begin{align*}
  \QU{R}
  &:=
  \left\{
  \textstyle
  \sum_{i=1}^k
  \alpha_i \ket{\psi^{(i)}}
  \middle|
  \alpha_i \in \C,
  \textstyle
  \sum_{i=1}^k
  |\alpha_i|^2
  =
  1
  \right\}
  \subset
  \C^{2 \otimes NE},
\end{align*}
where
$k := 2^{K_1-K_2}$ and
$\ket{\psi^{(i)}} := \sum_{\bm{x} \in X^{(i)}} \ket{\bm{x}} / \sqrt{ |X^{(i)}| }$
for $X^{(i)} \in B(C_1) / B(C_2)$.
For $b \in \setint{N}$,
let $R_b$ be
the system made up of
the $(b-1)E+1$, $(b-1)E+2$, ..., $(b-1)E+E$th qubits of $\rho \in \QU{R}$.
In this paper,
we call $R_b$ the {\it $b$th block}.

For simplicity,
we write $\ket{\phi_1} \ket{\phi_2}$
for the Kronecker product $\ket{\phi_1} \otimes \ket{\phi_2}$.
Hagiwara codes
is constructed by inserting the marker $\ket{ \bm{0}^{(t)} \bm{1}^{(t)} }$
to quantum RS codes,
similar to classical Hagiwara codes.
Mathematically,
for a quantum RS code $\QU{R}$ and
a codeword $\ket{\psi} = \sum_{ \bm{x} = (\bm{x}_1,\bm{x}_2,...,\bm{x}_N) \in \F_2^{NE} } \alpha_{\bm{x}} \ket{\bm{x}_1} \ket{\bm{x}_2} \cdots \ket{\bm{x}_N} \in \QU{R}$
($\bm{x}_i \in \F_2^{E}$),
the corresponding codeword in Hagiwara code is given by
\begin{align*}
  \ket{\phi}
  =
  \hspace*{-3mm}
  \sum_{ \bm{x} \in \F_2^{NE} }
  \hspace*{-3mm}
  \alpha_{\bm{x}}
  \ket{\bm{x}_1} \ket{ \bm{0}^{(t)} \bm{1}^{(t)} }
  \ket{\bm{x}_2} \ket{ \bm{0}^{(t)} \bm{1}^{(t)} }
  \cdots
  \ket{\bm{x}_N} \ket{ \bm{0}^{(t)} \bm{1}^{(t)} }.
\end{align*}

\section{Decoding Algorithm to Composite Errors for Hagiwara Codes}
In this section,
we present
a decoding algorithm for classical Hagiwara codes
to composite errors consisting of deletions and insertions.
In addition,
we extend this classical algorithm to quantum.

\subsection{Decoding Algorithm for Classical Hagiwara Codes}
\label{subsec:deco_C_hagi_code}
\subsubsection{Notations}
Let $\bm{c} \in \F_2^{N(E+2t)}$
be a codeword defined in Eq.~\eqref{eq:C_hagi_codeword}
and
let $\bm{y} \in \F_2^{N(E+2t)+r}$
be a received word
given by $t_d$-deletions and $t_i$-insertions to $\bm{c}$,
where
$t_d, t_i \ge 1$
and
$t_d + t_i \leq t$.
Then,
an upper bound on $t_d$ and $t_i$
is given by
$t_d \leq \lfloor (t-r)/2 \rfloor =: \tau_d$ and
$t_i \leq \lfloor (t+r)/2 \rfloor =: \tau_i$.
Define $\tau := \tau_d + \tau_i$.

For $b \in \PZ$,
we denote
the position of the previous symbol of $\bm{c}_b$
by $\beta_b$
and
the position of the last $0$ in the $b$th marker
by $\gamma_b$,
i.e.,
$\beta_b := (E+2t)(b-1)$
and
$\gamma_b := \beta_b + E + t$.
Define
$\bm{y}_b := \bm{y}_{\setint{\gamma_b-\tau_d+1,\gamma_b+\tau_i}}$.
Figure~\ref{fig: x_and_y} illustrates
the position $\beta_b$ and $\gamma_b$
and
range of $\bm{y}_b$.
Let $\bm{?}^{(t)}$ be
the $t$-repetition of the erasure symbol $?$.

\begin{figure}[t]
  \centering
      {\small
        \begin{tikzpicture}
          
          \node at (-2.0,1.8) {$\bm{c}$};

          \node at (-1.0,1.8) {$\cdots$};

          \draw[dotted, fill=gray, fill opacity=0.5] (-0.2,1.65) rectangle (0,2.0);
          \node at (-0.1,1.4) {$\beta_b$};
          \draw (0,1.65) rectangle (1.4,2.0);
          \node at (0.7,1.8) {$\bm{c}_b$};
          \draw (1.4,1.65) rectangle (2.2,2.0);
          \node at (1.8,1.8) {$0 \cdots 0$};
          \draw[dotted, fill=gray, fill opacity=0.5] (2.8,1.65) rectangle (3.0,2.0);
          \node at (2.9,1.4) {$\gamma_b$};
          \draw (2.2,1.65) rectangle (3.0,2.0);
          \node at (2.6,1.82) {$\bm{0}^{(\tau_d)}$};
          \draw (3.0,1.65) rectangle (3.8,2.0);
          \node at (3.4,1.82) {$\bm{1}^{(\tau_i)}$};
          \draw (3.8,1.65) rectangle (4.6,2.0);
          \node at (4.2,1.8) {$1 \cdots 1$};
          
          \node at (5.2,1.8) {$\cdots$};

          \begin{scope}[shift={(-0.4,-1.45)}]
            \node at (-1.6,1.8) {$\bm{y}$};
            
            \node at (-0.8,1.8) {$\cdots$};
            
            \draw[dotted, fill=gray, fill opacity=0.5] (0.2,1.65) rectangle (0.4,2.0);
            \draw (0,1.65) rectangle (1.4,2.0);
            \node at (0.7,1.8) {$\bm{c}_b$};
            \draw (1.4,1.65) rectangle (2.2,2.0);
            \node at (1.8,1.8) {$0 \cdots 0$};
            \draw[dotted, fill=gray, fill opacity=0.5] (3.2,1.65) rectangle (3.4,2.0);
            \draw (2.2,1.65) rectangle (3.0,2.0);
            \node at (2.6,1.82) {$\bm{0}^{(\tau_d)}$};
            \draw (3.0,1.65) rectangle (3.8,2.0);
            \node at (3.4,1.82) {$\bm{1}^{(\tau_i)}$};
            \draw (3.8,1.65) rectangle (4.6,2.0);
            \node at (4.2,1.8) {$1 \cdots 1$};
            
            \node at (5.2,1.8) {$\cdots$};
          \end{scope}

          \draw[dashed] (1.4,1.65) -- (1.0,0.55);
          \draw[dashed] (4.6,1.65) -- (4.2,0.55);
          
          \draw[dashed] (2.2,2.0) -- (2.2,0.55);
          \draw[dashed] (3.8,2.0) -- (3.8,0);
          \fill[fill=gray, fill opacity=0.3] (2.2,2.0) rectangle (3.8,0.2);
          
          \draw[<->, >=stealth] (2.2,0.05) -- (3.8,0.05);
          \node at (3.0,-0.2) {$\bm{y}_b$};

          \draw[<->, >=stealth] (2.2,2.2) -- (3.0,2.2);
          \node at (2.6,2.4) {$\tau_d$};

          \draw[<->, >=stealth] (3.0,2.2) -- (3.8,2.2);
          \node at (3.4,2.4) {$\tau_i$};

        \end{tikzpicture}
      }
      \caption{$\beta_b$, $\gamma_b$, and $\bm{y}_b$}
      \label{fig: x_and_y}
\end{figure}

\subsubsection{Algorithm}
Algorithm~\ref{algo:tran_era}
transforms from $\bm{y}$
to a sequence $\bm{z}$,
which is sequence $(c_1, c_2, ..., c_N) \in B(C)$
with $e$ block-erasures at $P$ and $m$ block-substitutions ($e + 2m \leq t$),
shown in Section~\ref{subsec:just}.
Then,
we get
the message from $\bm{z}$
by the error and erasure correcting algorithm for the RS code.

We describe the principle of Algorithm~\ref{algo:tran_era}.
In the case of insertion error,
there are two types of events;
{\it marker-preserve}
and
{\it marker-destruction}.
In Marker-preserve,
the format of the marker is preserved $\bm{t}_b = \bm{0}^i \bm{1}^{\tau-i}$.
This event occurs
when
symbols are inserted in $\bm{c}_b$,
$0$ is inserted in zeros of the $b$th marker,
or
$1$ is inserted in ones of the $b$th marker.
On the other hand,
in marker-destruction,
the format of the marker is not preserved.
This event occurs
when
$1$ is inserted in zeros of the $b$th marker
or
$0$ is inserted in ones of the $b$th marker.
Note that,
in the case of deletion errors,
the format of the marker is preserved.
If marker-preserve occurs,
by the $b$th marker
and
the number $u$ (resp.~$v$) of detected deletion (resp.~insertion) errors,
this algorithm
detects the number of deletions/insertions in $\bm{c}_b$
and
performs
as follows:
\begin{enumerate*}[label={(\roman*)}]
\item if $w$-deletions (Line $7$) or some-insertions (Line $9$)
  are detected,
  the algorithm
  outputs $\bm{z}_b = \bm{?}^{(E)}$ and
  adds the position $b$ to the set $P$ of the erased blocks,
\item if no deletions or insertions (Line $5$) are detected,
  it outputs $\bm{z}_b = \bm{y}_{ \setint{ \beta_b+l+1, \beta_b+l+E } }$.
\end{enumerate*}
On the other hand,
if marker-destruction occurs,
since some-insertions occurs,
the algorithm
detects single-insertion
and
outputs $\bm{z}_b = \bm{?}^{(E)}$
in Line $10$.

\begin{figure}[t]
  \begin{algorithm}[H]
    \caption{Transformation to Block-Substitutions and Erasures}
    \label{algo:tran_era}
    \begin{algorithmic}[1] 
      \REQUIRE $\bm{y} \in \F_2^{N(E+2t)+r}$.
      \ENSURE $\bm{z} \in \{ 0,1,? \}^{NE}$, $P \in \subint{N}{\leq t}$.
      \STATE $\tau_d \gets \lfloor (t-r)/2 \rfloor$,
      $\tau_i \gets \lfloor (t+r)/2 \rfloor$,
      $\tau \gets \tau_d + \tau_i$.
      \STATE $(u,v) \gets (0,0)$,
      $P \gets \emptyset$.
      \FOR{$b = 1,2,..., N$}
      \STATE $l \gets -u+v$.
      \IF{$\bm{y}_b = \bm{0}^{(\tau_d+l)} \bm{1}^{(\tau_i-l)}$}
      \STATE $\beta_b \gets (E+2t)(b-1)$,
      $\bm{z}_b \gets \bm{y}_{ \setint{ \beta_b+l+1, \beta_b+l+E } }$.
      \ELSIF{$\bm{y}_b = \bm{0}^{(\tau_d+l-w)} \bm{1}^{(\tau_i-l+w)}$ for some $w \in \PZ$}
      \STATE $\bm{z}_b \gets \bm{?}^{(E)}$,
      $P \gets P \cup \{ b \}$,
      $u \gets u+w$.
      \ELSE
      \STATE $\bm{z}_b \gets \bm{?}^{(E)}$,
      $P \gets P \cup \{ b \}$,
      $v \gets v+1$.
      \ENDIF
      \ENDFOR
      \STATE $\bm{z} \gets (\bm{z}_1, \bm{z}_2, ..., \bm{z}_N)$.
    \end{algorithmic}
  \end{algorithm}
\end{figure}

\subsubsection{Example}
For a classical $t=4$-erasure-correcting code $C \subset \F_{2^E}^5$
and
the codeword $\bm{c} = \bm{c}_1\bm{0}^{(4)}\bm{1}^{(4)} \cdots \bm{c}_5\bm{0}^{(4)}\bm{1}^{(4)}$
in classical Hagiwara code based on $C$,
assume that
the received word $\bm{y}$
is given by
$1$-insertion to $\bm{c}_1$,
$1$-deletion to $\bm{c}_3$,
and
$1$-deletion and $1$-insertion to $\bm{c}_4$,
as shown in Fig.~\ref{fig: ex}.
If we input $\bm{y}$,
Algorithm~\ref{algo:tran_era} performs
as follows.
Note that
$\tau_d = \tau_i = 2$
since $r = 0$.

\begin{figure*}[t]
  \centering
  \begin{tikzpicture}  
    \node at (-0.7,1.8) {$\bm{c}$};

    \draw (0,1.65) rectangle (1.0,2.0);
    \node at (0.5,1.8) {$\bm{c}_1$};
    \draw (0.05,2.1) -- (0.05,2.2) -- (0.95,2.2) -- (0.95,2.1);
    \node at (0.5,2.35) {\footnotesize $E$};
    \draw (1.05,2.1) -- (1.05,2.2) -- (1.75,2.2) -- (1.75,2.1);
    \node at (1.4,2.35) {\footnotesize $t$};
    \draw (1.0,1.65) rectangle (1.8,2.0);
    \foreach \x in {0,1,2,3} {\node at (1.1+0.2*\x,1.8) {$0$};}
    \draw (1.85,2.1) -- (1.85,2.2) -- (2.55,2.2) -- (2.55,2.1);
    \node at (2.2,2.35) {\footnotesize $t$};
    \draw (1.8,1.65) rectangle (2.6,2.0);
    \foreach \x in {0,1,2,3} {\node at (1.9+0.2*\x,1.8) {$1$};}
    \draw (0.45,1.4) -- (0.5,1.55) -- (0.55,1.4);
    \node at (0.5,1.25) {$1$};
    
    \begin{scope}[shift={(2.6,0)}]
      \draw (0,1.65) rectangle (1.0,2.0);
      \node at (0.5,1.8) {$\bm{c}_2$};
      \draw (1.0,1.65) rectangle (1.8,2.0);
      \foreach \x in {0,1,2,3} {\node at (1.1+0.2*\x,1.8) {$0$};}
      \draw (1.8,1.65) rectangle (2.6,2.0);
      \foreach \x in {0,1,2,3} {\node at (1.9+0.2*\x,1.8) {$1$};}
    \end{scope}
    
    \begin{scope}[shift={(5.2,0)}]
      \draw (0,1.65) rectangle (1.0,2.0);
      \node at (0.5,1.8) {$\bm{c}_3$};
      \draw (1.0,1.65) rectangle (1.8,2.0);
      \foreach \x in {0,1,2,3} {\node at (1.1+0.2*\x,1.8) {$0$};}
      \draw (1.8,1.65) rectangle (2.6,2.0);
      \foreach \x in {0,1,2,3} {\node at (1.9+0.2*\x,1.8) {$1$};}
      \node at (0.5,1.5) {$\times$};
    \end{scope}
    
    \begin{scope}[shift={(7.8,0)}]
      \draw (0,1.65) rectangle (1.0,2.0);
      \node at (0.5,1.8) {$\bm{c}_4$};
      \draw (1.0,1.65) rectangle (1.8,2.0);
      \foreach \x in {0,1,2,3} {\node at (1.1+0.2*\x,1.8) {$0$};}
      \draw (1.8,1.65) rectangle (2.6,2.0);
      \foreach \x in {0,1,2,3} {\node at (1.9+0.2*\x,1.8) {$1$};}
      \draw (0.35,1.4) -- (0.4,1.55) -- (0.45,1.4);
      \node at (0.4,1.25) {$0$};
      \node at (0.6,1.5) {$\times$};
    \end{scope}
    
    \begin{scope}[shift={(10.4,0)}]
      \draw (0,1.65) rectangle (1.0,2.0);
      \node at (0.5,1.8) {$\bm{c}_5$};
      \draw (1.0,1.65) rectangle (1.8,2.0);
      \foreach \x in {0,1,2,3} {\node at (1.1+0.2*\x,1.8) {$0$};}
      \draw (1.8,1.65) rectangle (2.6,2.0);
      \foreach \x in {0,1,2,3} {\node at (1.9+0.2*\x,1.8) {$1$};}
    \end{scope}
    
    

    \node at (-0.7,0.15) {$\bm{y}$};
    
    \begin{scope}[shift={(0,-1.65)}]
      \draw[dotted, fill=gray, fill opacity=0.3] (-0.2,1.65) rectangle (0,2.0);
      \node at (-0.1,2.2) {$\beta_1$};
      \draw (0,1.65) rectangle (1.2,2.0);
      \node at (0.6,1.8) {$\bm{c}^\prime_1$};
      \draw (1.2,1.65) rectangle (2.0,2.0);
      \foreach \x in {0,1,2,3} {\node at (1.3+0.2*\x,1.8) {$0$};}
      \draw (2.0,1.65) rectangle (2.8,2.0);
      \foreach \x in {0,1,2,3} {\node at (2.1+0.2*\x,1.8) {$1$};}
    \end{scope}
    
    \begin{scope}[shift={(2.8,-1.65)}]
      \draw[dotted, fill=gray, fill opacity=0.3] (-0.4,1.65) rectangle (-0.2,2.0);
      \node at (-0.3,2.2) {$\beta_2$};
      \draw (0,1.65) rectangle (1.0,2.0);
      \node at (0.5,1.8) {$\bm{c}_2$};
      \draw (1.0,1.65) rectangle (1.8,2.0);
      \foreach \x in {0,1,2,3} {\node at (1.1+0.2*\x,1.8) {$0$};}
      \draw (1.8,1.65) rectangle (2.6,2.0);
      \foreach \x in {0,1,2,3} {\node at (1.9+0.2*\x,1.8) {$1$};}
    \end{scope}
    
    \begin{scope}[shift={(5.4,-1.65)}]
      \draw[dotted, fill=gray, fill opacity=0.3] (-0.4,1.65) rectangle (-0.2,2.0);
      \node at (-0.3,2.2) {$\beta_3$};
      \draw (0,1.65) rectangle (0.8,2.0);
      \node at (0.45,1.8) {$\bm{c}^\prime_3$};
      \draw (0.8,1.65) rectangle (1.6,2.0);
      \foreach \x in {0,1,2,3} {\node at (0.9+0.2*\x,1.8) {$0$};}
      \draw (1.6,1.65) rectangle (2.4,2.0);
      \foreach \x in {0,1,2,3} {\node at (1.7+0.2*\x,1.8) {$1$};}
    \end{scope}
    
    \begin{scope}[shift={(7.8,-1.65)}]
      \draw[dotted, fill=gray, fill opacity=0.3] (-0.2,1.65) rectangle (0,2.0);
      \node at (-0.1,2.2) {$\beta_4$};
      \draw (0,1.65) rectangle (1.0,2.0);
      \node at (0.5,1.8) {$\bm{c}^\prime_4$};
      \draw (1.0,1.65) rectangle (1.8,2.0);
      \foreach \x in {0,1,2,3} {\node at (1.1+0.2*\x,1.8) {$0$};}
      \draw (1.8,1.65) rectangle (2.6,2.0);
      \foreach \x in {0,1,2,3} {\node at (1.9+0.2*\x,1.8) {$1$};}
    \end{scope}
    
    \begin{scope}[shift={(10.4,-1.65)}]
      \draw[dotted, fill=gray, fill opacity=0.3] (-0.2,1.65) rectangle (0,2.0);
      \node at (-0.1,2.2) {$\beta_5$};
      \draw (0,1.65) rectangle (1.0,2.0);
      \node at (0.5,1.8) {$\bm{c}_5$};
      \draw (1.0,1.65) rectangle (1.8,2.0);
      \foreach \x in {0,1,2,3} {\node at (1.1+0.2*\x,1.8) {$0$};}
      \draw (1.8,1.65) rectangle (2.6,2.0);
      \foreach \x in {0,1,2,3} {\node at (1.9+0.2*\x,1.8) {$1$};}
    \end{scope}
    
    

    \draw[fill=gray, fill opacity=0.3] (1.4,0) rectangle (2.2,2.0);
    \draw[fill=gray, fill opacity=0.3] (4.0,0) rectangle (4.8,2.0);
    \draw[fill=gray, fill opacity=0.3] (6.6,0) rectangle (7.4,2.0);
    \draw[fill=gray, fill opacity=0.3] (9.2,0) rectangle (10.0,2.0);
    \draw[fill=gray, fill opacity=0.3] (11.8,0) rectangle (12.6,2.0);
    
    \draw[->, >=stealth, thick] (0.6,-0.4) -- (0.6,-0.7);
    \node at (0.6,-1.2) {$\bm{?}^{(t)}$};
    \draw[->, >=stealth, thick] (3.3,-0.4) -- (3.3,-0.7);
    \node at (3.3,-1.2) {$\bm{c}_2$};
    \draw[->, >=stealth, thick] (5.85,-0.4) -- (5.85,-0.7);
    \node at (5.85,-1.2) {$\bm{?}^{(t)}$};
    \draw[->, >=stealth, thick] (8.3,-0.4) -- (8.3,-0.7);
    \node at (8.3,-1.2) {$\bm{c}^\prime_4$};
    \draw[->, >=stealth, thick] (10.9,-0.4) -- (10.9,-0.7);
    \node at (10.9,-1.2) {$\bm{c}_5$};
    
  \end{tikzpicture}
  \caption{Example of Algorithm~\ref{algo:tran_era}}
  \label{fig: ex}
\end{figure*}

\begin{enumerate}[label=\textbf{(b=\arabic*)}, leftmargin=*]
\item Since $l=0$ and $\bm{y}_1 = 0001$,
  Line $10$ is performed:
  $\bm{z}_1 \gets \bm{?}^{(E)}$,
  $P \gets \{ 1 \}$,
  and
  $v \gets 1$.
\item Since $l = 1$ and $\bm{y}_2 = 0001$,
  Line $6$ is performed:
  $\beta_2 \gets E+8$ and
  $\bm{z}_2 \gets \bm{y}_{\setint{ E+10, 2E+9 }} = \bm{c}_2$.
\item Since $l = 1$ and $\bm{y}_3 = 0011$, 
  Line $8$ is performed ($w = 1$):
  $\bm{z}_3 \gets \bm{?}^{(E)}$,
  $P \gets \{ 1, 3 \}$,
  and
  $u \gets 1$.
\item Since $l=0$ and $\bm{y}_4 = 0011$,
  Line $6$ is performed:
  $\beta_4 \gets 3(E+8)$ and
  $\bm{z}_4 \gets \bm{y}_{\setint{ 3E+25, 4E+24 }} = \bm{c}'_4$.
\item Since $l=0$ and $\bm{y}_5 = 0011$,
  Line $6$ is performed:
  $\beta_5 \gets 4(E+8)$ and
  $\bm{z}_5 \gets \bm{y}_{\setint{ 4E+33, 5E+32 }} = \bm{c}_5$.  
\end{enumerate}
Hence,
the output is
$(\bm{?}^{(E)}, \bm{c}_2, \bm{?}^{(E)}, \bm{c}'_4, \bm{c}_5)$ and $P =\{ 1,3 \}$.
In addition,
we get $(\bm{c}_1, \bm{c}_2, \bm{c}_3, \bm{c}_4, \bm{c}_5)$
by the decoder of $C$ for $2$-erasure and $1$-substitution.

\subsection{Decoding Algorithm for Hagiwara Codes}
\label{subsec:deco_hagicode}
\subsubsection{Block-Transformation/Unitary/Erasure Error}
\label{subsubsec:block_error}
In this paper,
we call a transformation to a state in $\state{2}{E}$ from $R_b$
a {\it block-transformation error}.
In particular,
we call
a block-transformation error a {\it block-unitary error}
if
this error is a unitary transformation,
and
a block-unitary error a {\it block-erasure error} at $b$
if
the receiver knows the error position $b$.
For example,
for qubits $q_1,q_2,q_3,q_4$,
a block-transformation (resp.~unitary) error changes
from $R_b = q_1 q_2 q_3 q_4$
to $q_1 q_3 q q_4$
(resp.~$\mat{U} R_b \mat{U}^\dagger$),
where
$q$ is a new qubit
and $\mat{U}$ is unitary.

The decoding of $\QU{R}$ reduces to
the decoding of $B(C_1)$ and $B(C_2)^{\perp}$ ($= B(C_2^\perp)$)
by the decoding algorithm \cite{BB04489264} of the CSS codes;
namely,
that reduces to the decoding of $C_1$ and $C_2^\perp$.
Then,
a block-unitary (resp.~erasure) error correction
is reduced to
a classical substitution (resp.~erasure) error correction
of $C_1$ and $C_2^\perp$.
For the minimum distance $d_1$ of $C_1$ and 
the minimum distance $d_2^\perp$ of $C_2^\perp$,
it is known that
$\QU{R}$ corrects $t = (\min \{ d_1, d_2^\perp \} - 1)$ block-erasures.
Hence,
$\QU{R}$ corrects
$e$ block-erasure errors and $m$ block-unitary errors,
where
$e + 2m \leq t$ \cite{grassl1997codes,gottesman2005quantum}.

\subsubsection{Algorithm}
Let $\sigma$ be
the received state,
$\QU{M}$ be
the measurement of a qubit
by $\{ M_0 = \ket{0}\bra{0}, M_1 = \ket{1}\bra{1} \}$,
and
$\bm{r}_b$ be
a sequence of a measurement outcomes
by $\QU{M}$ of
the $(\gamma_b-\tau_d+1)$, $(\gamma_b-\tau_d+2)$, ..., $(\gamma_b+\tau_i)$th qubits
in $\sigma$.
Note that
$\QU{M}$ does not destroy the qubits in $R_b$
since
the qubits in $R_b$
is not measured.
Then,
by modifying Algorithm~\ref{algo:tran_era},
we get a decoding algorithm for Hagiwara codes by
replacing 
$\bm{y}_b$
with $\bm{r}_b$,
$\bm{y}_{ \setint{ \beta_b+l+1, \beta_b+l+E } }$ in Line $6$
with the $\beta_b+l+1, \beta_b+l+2,..., \beta_b+l+E$th qubits in $\sigma$,
and
$\bm{?}^{(E)}$ in Lines $8,10$
with 
the erasure block.

\subsubsection{Justification}
The output is
a state given by
$e$ block-erasure errors at $P$ and $m$ block-transformation errors
to a codeword in $\QU{R}$,
shown in Section~\ref{subsec:just}.
Note that
any composite error consisting of insertions and deletion
is denoted by
$\QU{I}_P \circ \QU{D}_Q$
for some $P$ and $Q$ \cite{nakamura2026insertion}.
A block-transformation error
is decomposed into the Pauli errors
as follows:
\begin{theo}
  \label{theo:insel_pau_kai}
  Suppose that
  a quantum channel $\QU{I}_Q \circ \QU{D}_P$ changes
  from $\rho \in \state{2}{n}$
  to $\sigma \in \QU{I}_Q(\QU{D}_P(\rho))$,
  where
  $P = \{ p_1, p_2, ..., p_{t_d} \}$ ($p_1 < p_2 < \cdots < p_{t_d}$)
  and
  $Q = \{ q_1, q_2, ..., q_{t_d} \}$ ($q_1 < q_2 < \cdots < q_{t_d}$).
  Furthermore,
  let $\tau$ be a permutation on $\setint{n}$
  such that $\tau(p_i) = q_i$ for $i = 1,2,...,t_d$
  and
  $u < v \Rightarrow \tau(u) < \tau(v)$
  for $u,v \notin P$.
  In addition,
  suppose that
  $S_i \subset \setint{n}$ ($i = 1,2,...,k$, $k \in \Z^+$) is
  a set of continuous positions
  (i.e., $S_i = \setint{u,v}$ for some $u,v$)
  and
  $\tau_i$ is a permutation on $S_i$
  such that
  $S_1, S_2, ..., S_k$ are pairwise disjoint
  and
  $\tau = \tau_1 \circ \tau_2 \circ \cdots \circ \tau_k$.
  Then,
  $\QU{I}_Q \circ \QU{D}_P$ is denoted by
  a complex linear combination of the Pauli errors for the qubits in
  $P \cup \bigcup_{i=1}^k S_i$.
\end{theo}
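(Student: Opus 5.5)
We prove the statement in two stages. First we write the deletion–insertion channel as a permutation of qubits followed by a replacement of the qubits at positions $Q$. Then we expand that replacement and the permutation in the Pauli basis, and check that only the qubits in $P\cup\bigcup_i S_i$ are acted on nontrivially.

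\textbf{Step 1: reduce to a permutation plus a replacement.} Since $|P|=|Q|=t_d$, the output $\sigma$ lives in $\state{2}{n}$, so everything can be compared on $n$ qubits. We claim that every $\sigma \in \QU{I}_Q(\QU{D}_P(\rho))$ can be written as
\[
\sigma=\mathcal{N}_Q\bigl(\mat{U}_\tau \rho \mat{U}_\tau^\dagger\bigr),
\]
where $\mat{U}_\tau$ is the unitary permuting qubit positions according to $\tau$, and $\mathcal{N}_Q$ is a channel acting only on the qubits at $Q=\tau(P)$.

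To see this, note that applying $\mat{U}_\tau$ moves qubit $p_i$ to position $q_i$. Because $\tau$ is order preserving off $P$, the remaining qubits keep their relative order and land exactly on $\setint{n}\setminus Q$. Tracing out positions $Q$ then gives $\QU{D}_Q(\mat{U}_\tau\rho\mat{U}_\tau^\dagger)=\QU{D}_P(\rho)$. We check this by comparing the definitions of $\text{tr}_i$ coefficientwise on $\ket{\bm{x}}\bra{\bm{y}}$.

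It remains to produce $\mathcal{N}_Q$. The state $\sigma$ satisfies $\QU{D}_Q(\sigma)=\QU{D}_Q(\mat{U}_\tau\rho\mat{U}_\tau^\dagger)$. We take $\mathcal{N}_Q$ to trace out the qubits at $Q$ and re-prepare them in the conditional state prescribed by $\sigma$.

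\textbf{Step 2: expand in the Pauli basis.} Any linear map acting only on a qubit set $A$ can be written as $\mat{M}\mapsto\sum_{a,b} c_{a,b}\,\mat{E}_a \mat{M} \mat{E}_b^\dagger$, where $\mat{E}_a,\mat{E}_b$ range over Pauli operators supported on $A$. This holds because Pauli operators form a basis of the operators on $A$, and Kraus operators supported on $A$ expand in that basis.

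For $\mathcal{N}_Q$ this gives Paulis supported on $Q$.

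For $\mat{U}_\tau$ we use the hypothesis $\tau=\tau_1\circ\cdots\circ\tau_k$ with $\tau_i$ a permutation of $S_i$. Then $\mat{U}_\tau=\mat{U}_{\tau_1}\cdots\mat{U}_{\tau_k}$, and each $\mat{U}_{\tau_i}$ acts only on $S_i$. Hence $\mat{U}_{\tau_i}$ is a linear combination of Paulis supported on $S_i$, and the product is a linear combination of Paulis supported on $\bigcup_i S_i$.

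Composing the two expansions, we get a linear combination of Paulis supported on $Q\cup\bigcup_i S_i$.

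\textbf{Step 3: reconcile the support with the claimed set.} The statement names $P\cup\bigcup_i S_i$, so we must show $Q\subseteq\bigcup_i S_i \cup P$. Take $q\in Q$ with $q\notin P$. Then $q=\tau(p)$ for some $p\in P$ with $\tau(p)=q\neq p$, so $\tau$ moves $p$. A permutation composed of permutations on disjoint sets fixes every point outside their union, so $p$ lies in some $S_i$. Since $\tau_i$ maps $S_i$ to itself, $q=\tau(p)\in S_i$.

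\textbf{Main obstacle.} The delicate point is Step 1, in particular the order convention in the definition of $\QU{D}_J$: tracing out $j_t$ first and then the smaller indices. We must verify that the index bookkeeping makes $\QU{D}_Q\circ\mathrm{Ad}_{\mat{U}_\tau}=\QU{D}_P$ exactly. We will check this on basis operators $\ket{\bm{x}}\bra{\bm{y}}$ using $D_Q(\tau\cdot\bm{x})=D_P(\bm{x})$ and multiplicativity of the trace factors.

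A second subtlety is that $\mathcal{N}_Q$ depends on $\sigma$ and $\rho$. The statement only asks for the action on the given $\rho$ to be expressible as such a combination, so this dependence is acceptable.
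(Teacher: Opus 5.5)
Your Steps 2 and 3 are sound. Your ordering (permute first, act on $Q$, then show $Q\subseteq P\cup\bigcup_i S_i$) is a harmless variant of the paper's. The paper writes $\sigma=\mat{U}_\tau\pi\mat{U}_\tau^\dagger$ with $\pi\in\QU{I}_P(\QU{D}_P(\rho))$ and applies its separate result for $\QU{I}_P\circ\QU{D}_P$, so its support is $P$ from the start. The genuine gap is the one sentence that produces $\mathcal{N}_Q$, which carries the entire quantum content of the statement. The index bookkeeping you flag as the main obstacle is routine by comparison. ``Trace out the qubits at $Q$ and re-prepare them in the conditional state prescribed by $\sigma$'' is not a map acting only on $Q$. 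Anything that discards the $Q$-qubits and prepares new ones without touching $\setint{n}\setminus Q$ outputs $\text{tr}_Q(\omega)\otimes\eta$, where $\omega:=\mat{U}_\tau\rho\mat{U}_\tau^\dagger$. Conditioning on the other qubits would mean acting on them. The construction already fails for $n=2$, $P=Q=\{2\}$, $\rho=\sigma=\frac12(\ket{00}+\ket{11})(\bra{00}+\bra{11})$: the old $Q$-qubits have to be used, not discarded.

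What is missing is Uhlmann's theorem, together with a purity hypothesis. Suppose $\rho$ is pure. Then $\omega$ is a purification of $\QU{D}_P(\rho)$ whose purifying system is the $Q$-qubits, and $\sigma$ is another extension of $\QU{D}_P(\rho)$. Purify $\sigma$ with a reference $R$; Uhlmann then gives an isometry $V\colon\mathcal{H}_Q\to\mathcal{H}_Q\otimes\mathcal{H}_R$ with $\sigma=\text{tr}_R\bigl((\mat{I}\otimes V)\omega(\mat{I}\otimes V)^\dagger\bigr)$. That is a legitimate $\mathcal{N}_Q$, and your Step 2 then finishes the proof.

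Without purity the claim itself is false, so no argument can close the gap. Take $n=2$, $P=Q=\{2\}$, $\tau=\mathrm{id}$, $\rho=\frac12\mat{I}\otimes\ket{0}\bra{0}$ and $\sigma=\frac12(\ket{00}\bra{00}+\ket{11}\bra{11})$; this $\sigma$ lies in $\QU{I}_Q(\QU{D}_P(\rho))$. Every $\sum_{a,b}c_{a,b}(\mat{I}\otimes\mat{E}_a)\rho(\mat{I}\otimes\mat{E}_b)^\dagger$ equals $\frac12\mat{I}\otimes\mat{M}$ for some $\mat{M}$, whatever the complex coefficients. So it is never $\sigma$, even allowing $\rho$- and $\sigma$-dependent coefficients.

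The paper's own argument leans on the same step without proving it. Its proof for $\QU{I}_P\circ\QU{D}_P$ simply posits a unitary on the $P$-qubits and an environment state that reproduce $\sigma$. That is exactly this dilation, and it is guaranteed only in situations such as pure $\rho$ (e.g.\ codewords). You should state the purity hypothesis and prove the dilation explicitly.
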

\begin{coro}
  \label{coro:tran_pau}
  A block-transformation error for $R_b$
  is decomposed into the Pauli errors for qubits in $R_b$.
\end{coro}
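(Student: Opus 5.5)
We derive Corollary~\ref{coro:tran_pau} from Theorem~\ref{theo:insel_pau_kai}. The plan is to write the block-transformation error as an insertion--deletion channel $\QU{I}_Q \circ \QU{D}_P$ acting only inside $R_b$, and then show that every position the theorem involves lies in $R_b$.

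First, recall how a block-transformation error arises in the algorithm. In Line~$6$, a block $R_b$ with no detected net shift is extracted from $\sigma$ at positions $\beta_b+l+1,\dots,\beta_b+l+E$. Inside this window some qubits of the original $R_b$ may have been deleted and the same number of new qubits inserted. The case of the transformed example $q_1q_2q_3q_4 \mapsto q_1q_3qq_4$ in Section~\ref{subsubsec:block_error} is typical. Restricting to this window, which we relabel as $\setint{E}$, the error equals $\QU{I}_Q \circ \QU{D}_P$ with $P,Q \subset \setint{E}$ and $|P|=|Q|=t_d$. The sizes agree because the window length is preserved. All other qubits are untouched, so on the whole system the error acts as the identity outside $R_b$. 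For an arbitrary block-transformation, that is, an arbitrary channel from $R_b$ to $\state{2}{E}$, we argue more directly: it is a linear map on $E$-qubit operators, and the Pauli products on $R_b$ form a basis of all operators on $\C^{2\otimes E}$. Hence each Kraus operator, and therefore the channel, is a linear combination of Pauli errors supported on $R_b$. We present both views and use the theorem for the structured case.

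Second, we apply Theorem~\ref{theo:insel_pau_kai} with $n=E$. Let $\tau$ be the order-preserving permutation on $\setint{E}$ with $\tau(p_i)=q_i$, as the theorem requires. We take a single interval $S_1=\setint{1,E}$ (so $k=1$) and $\tau_1=\tau$. Trivially $\tau=\tau_1$, and $S_1$ is a set of continuous positions. The theorem then expresses the error as a complex linear combination of Pauli errors on $P\cup S_1=\setint{E}$, which are exactly the qubits of $R_b$.

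Third, we translate back to global indices. Identity on qubits outside $R_b$ means that each Pauli factor there is $I$. So the decomposition consists of Pauli errors for qubits in $R_b$ only.

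The main obstacle is the first step: justifying that the block-level error really factors as identity outside the window times an insertion--deletion channel inside it. In particular, the deletions and insertions affecting $R_b$ must be balanced within the extracted window, and no qubit from a neighbouring marker can be pulled in. This follows from the marker-preserve condition checked in Line~$5$, with $l$ correctly tracking the net shift. I would argue it by induction on $b$, using the justification of Section~\ref{subsec:just}.
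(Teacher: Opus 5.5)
Your core argument is the paper's proof: in a block-transformation of $R_b$ the positions in $P$ and $Q$ lie in the range of $R_b$ and $\tau$ permutes only that range, so Theorem~\ref{theo:insel_pau_kai}, with that range as the single interval $S_1$, gives Pauli errors on $P\cup S_1\subseteq R_b$; apply it, as the paper does, to the whole state rather than with $n=E$, because your $n=E$ version and your Kraus-basis alternative both presuppose the factorization ``map on $R_b$ $\otimes$ identity elsewhere,'' which the global application (where $\tau$ automatically fixes every position outside $R_b$) makes unnecessary. Your ``main obstacle'' is not needed for the corollary, and its claim that no marker qubit can enter the Line~$6$ window is false for $b\in Q_E$ (with no other errors, a $1$ inserted before the last $1$ of marker $b-1$ plus one deletion in $\bm{c}_b$ keeps $l=0$ and moves that last $1$ to position $\beta_b+1$), though this is harmless, since a marker qubit is in a fixed product state and the window content is still an insertion--deletion of $R_b$.
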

These properties are shown in Appendix~\ref{sec:app1}.
From the above,
a decoding algorithm to a block-unitary error
also corrects
a block-transformation error.
Hence,
we can correct
$t_d$-deletions and $t_i$-insertions by Algorithm~\ref{algo:tran_era},
since $\QU{R}$ corrects $e$ block-erasure errors and $m$ block-unitary errors.

\subsection{Justification for Algorithm~\ref{algo:tran_era}}
\label{subsec:just}
This section
proves that
Algorithm~\ref{algo:tran_era}
transforms from $\bm{y}$
to a sequence $\bm{z}$,
which is sequence $(c_1, c_2, ..., c_N) \in B(C)$
with $e$ block-erasures at $P$ and $m$ block-substitutions,
where $e + 2m \leq t$.

\subsubsection{Formularization}
For each $b \in \setint{N}$,
we denote the value of $u$ (resp.~$v$, $l$) in Line $11$
by $u_b$ (resp.~$v_b$, $l_b$),
where
$u_0, v_0 := 0$.
Let $P_e$ (resp.~$P_s$) be 
the positions of the erased (resp.~substituted) blocks in $\bm{y}$,
i.e.,
\begin{align*}
  P_e
  =
  \{
  b \in \setint{N}
  \mid
  &u_{b-1} < u_b
  \lor
  v_{b-1} < v_b
  \},\\
  P_s
  =
  \{
  b \in \setint{N}
  \mid
  &\bm{y}_{ \setint{ \beta_b+l_b+1, \beta_b+l_b+E } }
  \neq
  \bm{c}_b,\\
  &u_{b-1} = u_b,
  v_{b-1} = v_b
  \}.
\end{align*}
Note that
$\bm{y}_{ \setint{ \beta_b+l_b+1, \beta_b+l_b+E } } = \bm{c}_b$
for $b \notin P_e \cup P_s$.
Since $e = |P_e|$ and $m = |P_s|$,
the following theorem
gives the main result of this section.

\begin{theo}
  \label{theo:main}
  $|P_e| + 2|P_s| \leq t$.
\end{theo}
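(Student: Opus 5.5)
We will prove the bound by charging every block in $P_e \cup P_s$ to the deletions and insertions that caused it. Each erased block will use at least one error. Each substituted block will use at least two errors, one deletion and one insertion. Since $t_d + t_i \leq t$, this gives $|P_e| + 2|P_s| \leq t_d + t_i \leq t$.

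The first step is an invariant, proved by induction on $b$. Let $D_b$ and $I_b$ be the numbers of deletions and insertions that act on positions up to the end of the $b$th marker (the first $\gamma_b + t$ positions of $\bm{c}$). We claim $u_b \le D_b$, $v_b \le I_b$, and that $l_b = -u_{b-1}+v_{b-1}$ relates to the true offset $I_{b-1}-D_{b-1}$ so that the window $\bm{y}_b$ always covers the $0/1$ boundary of the $b$th marker. This uses $t_d \leq \tau_d$ and $t_i \leq \tau_i$, and the fact that each marker half has length $t \ge \tau_d,\tau_i$ minus the offset. The cases follow the algorithm.
\begin{itemize}
\item Line 8 ($w$ extra zeros missing) can only happen if at least $w$ more deletions than insertions hit block $b$ or its marker. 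Then $u$ increases by $w \le$ (new deletions), and $b \in P_e$ consumes at least $w \ge 1$ deletions.
\item Line 10 is either marker-destruction, which needs an insertion in the marker, or a surplus of insertions. In both cases at least one new insertion occurs, and $v$ grows by one.
\end{itemize}
So $|P_e| \le \sum_b \big((u_b-u_{b-1})+(v_b-v_{b-1})\big)$, and each summand is bounded by the errors that actually occur in that segment, $(D_b - D_{b-1}) + (I_b - I_{b-1})$.

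The second step treats $b \in P_s$. In this case Line 6 was executed, so the marker pattern matched the expected offset $l_b$, and $u,v$ did not change. Suppose block $b$ together with its marker received no errors. Then the invariant implies that the true offset equals $l_b$, so $\bm{y}_{\setint{\beta_b+l_b+1,\beta_b+l_b+E}} = \bm{c}_b$, which contradicts $b \in P_s$. Hence some error occurred in the segment. The marker pattern is preserved with unchanged offset, so the net shift of that segment is zero. Therefore the segment contains at least one deletion and at least one insertion, that is, at least two errors. These errors are not counted by $u,v$, so they are disjoint from the errors charged to $P_e$.

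Summing over segments, each error lies in exactly one segment (block plus its marker). The two charges therefore never double count, and $|P_e| + 2|P_s| \le t_d+t_i \le t$.

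The main obstacle is the invariant in the first step. We must show that the accumulated estimate $l_b$ tracks the true offset closely enough that the window $\bm{y}_b$ sees the $0/1$ boundary and the case analysis is valid. One difficulty is that a Line 10 event adds only $1$ to $v$ even when several insertions occurred; the inequality $v_b \le I_b$ still holds, but the offset estimate may lag. The other is that insertions of $1$ into the zeros of a marker may cause a preserved-looking pattern with a shifted boundary. The plan is to show by induction that the true offset minus $l_b$ equals the uncounted surplus, and that this surplus is always absorbed within the window sizes $\tau_d,\tau_i$. An undercount in $l_b$ would only produce more erasures or substitutions, and each of these is still paid for by the errors that caused the discrepancy.
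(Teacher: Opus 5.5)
Your charging idea is the right one, and it is what the paper encodes: an erased block costs an error counted by $u$ or $v$, and a substituted block costs one deletion and one insertion that $u,v$ never count. The gap is that you carry this out \emph{locally}, segment by segment, and both local claims fail. Two things let the estimate $l_{b-1}=v_{b-1}-u_{b-1}$ drift from the true offset. First, Line~10 raises $v$ by only one however many insertions occurred. Second, errors among the trailing ones of marker $b$ lie outside its window and are seen only at marker $b+1$. When the estimate has drifted, a block can be erased or substituted while its own segment holds fewer errors than you charge.

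Here is a concrete case. Take $t=4$ and the zero codeword. Insert a $0$ into $\bm{c}_1$, insert a $1$ between the third and fourth zeros of the first marker, and delete a symbol of $\bm{c}_2$; then $r=1$, $\tau_d=1$, $\tau_i=2$. The window $\bm{y}_1=010$ triggers Line~10, so $v_1=1$ and $l=1$, while the true offset is $2$. Next, $\bm{y}_2=001=\bm{0}^{(2)}\bm{1}^{(1)}$ triggers Line~6, which outputs $\bm{z}_2=(1,0,\dots,0)\neq\bm{c}_2$. So $2\in P_s$, although segment $2$ contains a single deletion and no insertion. The second unit is paid by the uncounted insertion of segment $1$. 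Here $|P_e|+2|P_s|=3=t_d+t_i$, so the charge must cross segments. If you drop the deletion (now $r=2$, $\tau_i=3$), then $\bm{y}_2=0001$ is erased by Line~10 with no error at all in segment $2$. So your per-segment bound on $(u_b-u_{b-1})+(v_b-v_{b-1})$ also fails, and the ``no double counting'' conclusion does not follow.

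What is missing is a global invariant that lets surplus errors from earlier segments pay for later blocks, while keeping the deletion and insertion budgets separate. The paper partitions the blocks into $Q_B,Q_E,Q_D,Q_I$. Here $Q_E\supseteq P_s$ is the set of Line-6 blocks with $j'_{b-1}<j_b$ or $k'_{b-1}<k_b$. With $n_b=|\setint{b}\cap Q_E|$, it proves by induction on $b$ that $u_b+n_b\le j'_b$ and $v_b+n_b\le k'_b$. Adding these and using $|Q_D\cup Q_I|\le u_N+v_N$ gives the theorem.

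The decisive step is $b\in Q_E$ when one inequality already has slack, say the insertion one, as in the example. Then $l_{b-1}<k'_{b-1}-j'_{b-1}$, so the match $l_{b-1}=k_b-j_b$ forces $j_b>j'_{b-1}$, a fresh deletion. Both inequalities therefore survive the increment of $n_b$. If both are tight, zero net shift forces one new error of each type.

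Your suggested repair, that the true offset minus $l_b$ equals the uncounted surplus, only tracks the \emph{difference} of the two slacks, which is essentially an identity. What you need is that each slack separately stays nonnegative after reserving one deletion and one insertion per substituted block.
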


\subsubsection{Notations and their Properties}
Let $J \in \subint{N(E+2t)}{\leq \tau_d}$ (resp.~$K \in \subint{N(E+2t)+r}{\leq \tau_i}$) be
the set of the positions of the deleted (resp.~inserted) symbols
in $\bm{c}$ (resp.~$\bm{y}$)
i.e.,
$\bm{y} \in I_K( D_J(\bm{c}) )$,
and
let $j_b$ be
the number of the deleted symbols in $\bm{c}_{\setint{\gamma_b}}$,
i.e.,
$j_b = |\setint{\gamma_b} \cap J|$.
Then,
the $(\gamma_b-j_b)$th symbol in $D_J(\bm{c})$
is the last $0$ in the $b$th marker
since
$t_d \leq t-1$,
as shown on the left side of Fig.~\ref{fig: special_bits}.
The $(\gamma_b - j_b)$th symbol in $D_J(\bm{c})$ moves
to the $(\gamma_b - j_b + k_b)$th symbol in $\bm{y}$
by the insertion errors,
where $k_b$ is a non-negative integer.
Let $\xi_b$ be
the number of consecutive zeros
following the $(\gamma_b-j_b+\overline{k}_b)$th symbol in $\bm{y} \in I_K(D_J(\bm{c}))$,
as shown on the left side of Fig.~\ref{fig: special_bits}.
Then,
the number of the inserted symbols in $\bm{y}_{\setint{\gamma_b - j_b + \overline{k}_b + \xi_b}}$
equals $(\overline{k}_b + \xi_b) =: k_b$.
To summarize,
$\bm{y}_{\setint{\gamma_b - j_b + k_b}}$ is given
by $j_b$-deletions and $k_b$-insertions to $\bm{c}_{\setint{\gamma_b}}$,
as shown on the left side of Fig.~\ref{fig: special_bits}.

\begin{figure}[t]
  \centering
  \begin{tikzpicture}
    \node at (-0.2,-0.7) {$\bm{c}$};
    \draw[->, thick] (-0.2,-1.1) -- (-0.2,-1.4);
    \node at (-0.2,-1.25)[right] {$D_J$};
    \node at (-0.2,-1.9) {$D_J(\bm{c})$};
    \draw[->, thick] (-0.2,-2.4) -- (-0.2,-2.7);
    \node at (-0.2,-2.55)[right] {$I_K$};
    \node at (-0.2,-3.1) {$\bm{y}$};

    \node at (0.8,-0.7) {$...$};
    \foreach \x in{1,2,3} \draw (1.0+0.2*\x,-0.7) node{$0$};
    \node[fill=gray, fill opacity=0.4, text opacity=1, inner sep=.75pt] at
    (1.0+0.2*4,-0.7){$0$};
    \foreach \x in{1,2,3,4} \draw (1.8+0.2*\x,-0.7) node{$1$};
    \node at (2.9,-0.7) {$...$};
    \draw[<-] (1.8,-0.45) -- (1.8,-0.2) node[above]{\tiny $\gamma_b$};
    
    \node at (0.8,-1.9) {$...$};
    \draw (0.8+0.2*2,-1.9) node{$0$};  
    \node[fill=gray, fill opacity=0.4, text opacity=1, inner sep=.75pt] at
    (0.8+0.2*3,-1.9){$0$};  
    \foreach \x in{1,2,3,4} \draw (1.4+0.2*\x,-1.9) node{$1$};
    \node at (2.5,-1.9) {$...$};
    
    \node at (0.8,-3.1) {$...$};
    \foreach \x in{1,2,3} \draw (1.0+0.2*\x,-3.1) node{$0$};
    \node[fill=gray, fill opacity=0.4, text opacity=1, inner sep=.75pt] at
    (1.0+0.2*4,-3.1){$0$}; 
    \node[draw, inner sep=.75pt] at (1.0+0.2*5,-3.1){$0$};
    \draw (2.2+0.2*0,-3.1) node{$1$};
    \draw (2.2+0.2*1,-3.1) node{$0$};
    \draw (2.2+0.2*2,-3.1) node{$1$};
    \node at (2.9,-3.1) {$...$};

    \draw[dotted] (1.7,-0.6) -- (1.7,-2.0);
    \node at (1.5,-1.2) {\footnotesize $j_b$};
    \draw[->] (1.7,-1.4) -- (1.3,-1.4);
    \draw[dotted] (1.3,-1.2) -- (1.3,-3.2);
    \node at (1.5,-2.3) {\footnotesize $\overline{k}_b$};
    \draw[->] (1.3,-2.5) -- (1.7,-2.5);    
    \draw[dotted] (1.7,-2.4) -- (1.7,-3.2);
    \node at (1.8,-2.7) {\footnotesize $\xi_b$};
    \draw[->] (1.7,-2.9) -- (1.9,-2.9);
    \draw[dotted] (1.9,-2.4) -- (1.9,-3.2);
    \draw[<-] (2,-3.4) -- (2,-3.65)
    node[below]{\tiny $\gamma_b - j_b + k_b$};

    \draw[dotted] (3.5,0.25) -- (3.5,-4);

    \begin{scope}[shift={(4.5,0)}]
      \node at (-0.2,-0.7) {$\bm{c}$};
      \draw[->, thick] (-0.2,-1.1) -- (-0.2,-1.4);
      \node at (-0.2,-1.25)[right] {$D_J$};
      \node at (-0.2,-1.9) {$D_J(\bm{c})$};
      \draw[->, thick] (-0.2,-2.4) -- (-0.2,-2.7);
      \node at (-0.2,-2.55)[right] {$I_K$};
      \node at (-0.2,-3.1) {$\bm{y}$};

      \node at (0.8,-0.7) {$...$};
      \draw (1.0+0.2*1,-0.7) node{$0$};
      \node[fill=gray, fill opacity=0.2, text opacity=1, inner sep=.75pt] at
      (1.0+0.2*2,-0.7) {$0$};
      \node at (1.7,-0.7) {$...$};
      \draw (1.8+0.2*1,-0.7) node{$1$};
      \node[fill=gray, fill opacity=0.4, text opacity=1, inner sep=.75pt] at
      (1.8+0.2*2,-0.7) {$1$};
      \draw (1.8+0.2*3,-0.7) node{$1$};
      \node at (2.7,-0.7) {$...$};
      \node at (1.4,-0.45) {\tiny $\gamma_b$};
      \draw[<-] (2.2,-0.45) -- (2.2,-0.2) node[above]{\tiny $\gamma_b + \tau_i - k'_b + j'_b$};
      
      \node at (0.8,-1.9) {$...$};
      \node[fill=gray, fill opacity=0.2, text opacity=1, inner sep=.75pt] at
      (1.2,-1.9) {$0$};
      \node at (1.5,-1.9) {$...$};
      \node[fill=gray, fill opacity=0.4, text opacity=1, inner sep=.75pt] at
      (1.8,-1.9) {$1$};
      \node at (2,-1.9) {$1$};
      \node at (2.3,-1.9) {$...$};
      
      \node at (0.8,-3.1) {$...$};
      \node[fill=gray, fill opacity=0.2, text opacity=1, inner sep=.75pt] at
      (1.4,-3.1) {$0$};
      \node at (1.7,-3.1) {$...$};
      \node[fill=gray, fill opacity=0.4, text opacity=1, inner sep=.75pt] at
      (2,-3.1) {$1$};
      \node at (2.3,-3.1) {$...$};
      \node[draw, inner sep=.75pt] at (2.6,-3.1){$0$};
      \node at (2.9,-3.1) {$...$};

      \draw[dotted] (2.1,-0.6) -- (2.1,-2.0);
      \node at (1.9,-1.3) {\footnotesize $j'_b$};
      \draw[<-] (1.7,-1.5) -- (2.1,-1.5);
      \draw[dotted] (1.7,-3.2) -- (1.7,-1.3);
      \node at (1.9,-2.3) {\footnotesize $k'_b - \kappa_b$};
      \draw[->] (1.7,-2.5) -- (2.1,-2.5);
      \draw[dotted] (2.1,-2.4) -- (2.1,-3.2);
      \draw[<-] (2.7,-2.9) -- (2.1,-2.9);
      \node at (2.4,-2.7){\footnotesize $\kappa_b$};
      \draw[dotted] (2.7,-2.4) -- (2.7,-3.2);
      \draw[<-] (2.6,-3.4) -- (2.6,-3.65) node[below]{\tiny $\gamma_b + \tau_i$};
    \end{scope}  
  \end{tikzpicture}  
  \vspace{-0.7\baselineskip}
  \caption{Change $\bm{c}_{\setint{\gamma_b}}$ (Left) and $\bm{y}_{\setint{\gamma_b+\tau_i}}$ (Right)}
  \label{fig: special_bits}
\end{figure}

Define $\kappa_b := \min \{ \kappa \ge 0 \mid \gamma_b + \tau_i - \kappa \notin K\}$;
namely,
the $(\gamma_b + \tau_i - \kappa_b + 1), (\gamma_b + \tau_i - \kappa_b + 2), ..., (\gamma_b + \tau_i)$th symbols
in $\bm{y}_{\setint{\gamma_b+\tau_i}}$
are inserted
and
the $(\gamma_b + \tau_i - \kappa_b)$th symbol in $\bm{y}_{\setint{\gamma_b+\tau_i}}$
is not inserted.
Hence,
the number of the inserted symbols in $\bm{y}_{\setint{\gamma_b+\tau_i}}$
equals
$k'_b := (|\setint{\gamma_b+\tau_i-\kappa_b} \cap K| + \kappa_b)$
since
that in $\bm{y}_{\setint{\gamma_b+\tau_i-\kappa_b}}$
equals
$|\setint{\gamma_b+\tau_i-\kappa_b} \cap K|$.
In addition,
the $(\gamma_b + \tau_i - \kappa_b)$th symbol in $\bm{y}_{\setint{\gamma_b+\tau_i}}$
is given by
shifting the $(\gamma_b + \tau_i - k'_b)$th symbol in $D_J(\bm{c})$.
To summarize,
if
for some $j'_b$,
the $(\gamma_b + \tau_i - k'_b)$th symbol in $D_J(\bm{c})$
is given by
shifting the last symbol in $\bm{c}_{\setint{\gamma_b + \tau_i - k'_b + j'_b}}$,
$\bm{y}_{\setint{\gamma_b+\tau_i}}$ is given
by $j'_b$-deletions and $k'_b$-insertions to $\bm{c}_{\setint{\gamma_b + \tau_i - k'_b + j'_b}}$,
as shown on the right side of Fig.~\ref{fig: special_bits}.

In Algorithm~\ref{algo:tran_era},
the process branches based on $\bm{y}_b = \bm{y}_{\setint{\gamma_b-\tau_d+1,\gamma_b+\tau_i}}$.
So,
we define
the following subset $\Omega_b$ (resp.~$\Gamma_b$) of the front (resp.~tail) part of
$\setint{\gamma_b-\tau_d+1,\gamma_b+\tau_i}$.
\begin{align*}
  \Omega_b
  :=
  &\setint{\gamma_b-\tau_d+1,\gamma_b-j_b+k_b-\xi_b-1},\\
  =
  &\setint{\gamma_b-\tau_d+1,\gamma_b-j_b+\overline{k}_b-1},\\
  \Gamma_b
  :=
  &\setint{\gamma_b-j_b+k_b,\gamma_b+\tau_i}
\end{align*}
Recall that the last $0$ of the $b$th marker
is
$(\gamma_b-j_b+\overline{k}_b)$th symbol in $\bm{y}$.
If $y_i = 1$ ($i \in \Omega_b$),
marker-destruction occurs
and
$\bm{y}_b$ cannot be expressed
by the form $\bm{0}^{(\tau_d+l)} \bm{1}^{(\tau_i-l)}$ or $\bm{0}^{(\tau_d+l-w)} \bm{1}^{(\tau_i-l+w)}$.
Hence,
in such case,
Algorithm~\ref{algo:tran_era}
performs
the insertion detection.
Similarly,
if $y_i = 0$ ($i \in \Gamma_b$),
Algorithm~\ref{algo:tran_era}
performs
the insertion detection.
To summarize,
as shown in Table~\ref{tab:algo_for_case},
Algorithm~\ref{algo:tran_era}
performs
the deletion/insertion detection
for each case.

\begin{table}
  \begin{center}
    \caption{Execution of Algorithm~\ref{algo:tran_era} for Each Case}
    \label{tab:algo_for_case}
    \begin{tabular}{l|l|ll}
      \hline
      \multicolumn{2}{c|}{Case} & Execution\\
      \hline
      Marker-Destruction & any & Insertion detection & (Line $10$)\\
      Marker-Preserve & $k_b - j_b = l_{b-1}$ & No detection & (Line $6$)\\
      Marker-Preserve & $k_b - j_b < l_{b-1}$ & Deletion detection & (Line $8$)\\
      Marker-Preserve & $k_b - j_b > l_{b-1}$ & Insertion detection & (Line $10$)\\
      \hline
    \end{tabular}
  \end{center}
\end{table}

\subsubsection{Proof of Theorem~\ref{theo:main}}
We prove the following lemmas
for Theorem~\ref{theo:main}.
We denote
the set of $b \in \setint{N}$ such that
marker-preserve occurs in the $b$th marker
by $M(P)$.

\begin{lem}
  \label{lem:Q_BEDI}
  Define
  \begin{align*}
    Q_B
    &:=
    \{
    b \in M(P)
    \mid
    k_b - j_b = l_{b-1},
    j'_{b-1} = j_b,
    k'_{b-1} = k_b
    \},\\
    Q_E
    &:=
    \{
    b \in M(P)
    \mid
    k_b - j_b = l_{b-1},
    [
      j'_{b-1} < j_b
      \lor
      k'_{b-1} < k_b
    ]
    \},\\
    Q_D
    &:=
    \{
    b \in M(P)
    \mid
    k_b - j_b < l_{b-1}
    \},\\
    Q_I
    &:=
    \{
    b \in \setint{N}
    \mid
    b \notin M(P)
    \lor
    k_b - j_b > l_{b-1}
    \},
  \end{align*}
  where
  $j'_0, k'_0 := 0$.
  Then,
  the following hold:
  \begin{enumerate}
  \item If $b \in Q_B$,
    then
    $y_{\beta_b+l_{b-1}+i} = c_{\beta_b+i}$ for all $i \in \setint{E}$,
    $u_{b-1} = u_b$,
    and
    $v_{b-1} = v_b$
    hold.
  \item If $b \in Q_E$,
    $u_{b-1} = u_b$
    and
    $v_{b-1} = v_b$
    hold.
  \item If $b \in Q_D$,
    $u_{b-1} < u_b$
    holds.
  \item If $b \in Q_I$,
    $v_{b-1} < v_b$
    holds.
  \item $b \in P_e$ iff
    $b \in Q_D \cup Q_I$.
  \item If $b \in P_s$,
    $b \in Q_E$ holds.
  \end{enumerate}
\end{lem}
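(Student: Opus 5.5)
The plan is to go through the six items in order. The first four come from the case analysis in Table~\ref{tab:algo_for_case}; items 5 and 6 then follow by bookkeeping. The basic tool is an invariant maintained by induction on $b$: after processing block $b$, the algorithm's offset $l_b=-u_b+v_b$ relates to the true offset $k'_b-j'_b$ at the $b$th marker. In particular, the offset $l_{b-1}$ used at the start of block $b$ equals the net shift $k'_{b-1}-j'_{b-1}$ whenever no erasure was declared at block $b-1$.

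For items 3 and 4 I use the definition of $\bm{y}_b$. In the marker-preserve case, the last $0$ of the $b$th marker sits at position $\gamma_b-j_b+k_b$ in $\bm{y}$ (after absorbing the $\xi_b$ inserted zeros). Hence $\bm{y}_b=\bm{0}^{(\tau_d+k_b-j_b)}\bm{1}^{(\tau_i-k_b+j_b)}$, using $\Omega_b$ all zeros and $\Gamma_b$ all ones. Here I need $|k_b-j_b|$ to stay within the window; this follows from $j_b\le t_d\le\tau_d$ and $k_b\le t_i\le\tau_i$.
\begin{itemize}
\item If $k_b-j_b<l_{b-1}$, Line 8 fires with $w=l_{b-1}-k_b+j_b\in\PZ$, giving $u_b>u_{b-1}$.
\item If $k_b-j_b>l_{b-1}$, the pattern matches neither Line 5 nor Line 7, so Line 10 fires and $v_b=v_{b-1}+1$.
\item In marker-destruction, some $y_i=1$ with $i\in\Omega_b$ (or $y_i=0$ with $i\in\Gamma_b$), so neither pattern matches and Line 10 fires again.
\end{itemize}
This gives items 3 and 4. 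The same computation with $k_b-j_b=l_{b-1}$ shows Line 6 fires, so $u,v$ are unchanged; this gives the counter part of items 1 and 2.

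For the symbol identity in item 1, I take $b\in Q_B$. Then $j'_{b-1}=j_b$ and $k'_{b-1}=k_b$ mean that no deletions or insertions occur between the end of the $(b-1)$th marker region and $\gamma_b$. So the symbols of $\bm{c}_b$ are shifted uniformly by $k'_{b-1}-j'_{b-1}=k_b-j_b=l_{b-1}$. I will make this precise by matching the prefix descriptions: $\bm{y}_{\setint{\gamma_{b-1}+\tau_i}}$ arises from $\bm{c}_{\setint{\gamma_{b-1}+\tau_i-k'_{b-1}+j'_{b-1}}}$, and $\bm{y}_{\setint{\gamma_b-j_b+k_b}}$ arises from $\bm{c}_{\setint{\gamma_b}}$. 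Since the deletion and insertion counts agree, the intermediate segment, which contains $\bm{c}_b$, is copied with offset $l_{b-1}$. This gives $y_{\beta_b+l_{b-1}+i}=c_{\beta_b+i}$.

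Item 5 follows because the four sets partition $\setint{N}$: every $b$ is either outside $M(P)$, or in $M(P)$ with one of three order relations, and the equality case splits into $Q_B$ and $Q_E$ via the monotonicity $j'_{b-1}\le j_b$, $k'_{b-1}\le k_b$. Thus $b\in P_e$, meaning a counter increases, holds exactly on $Q_D\cup Q_I$ by items 1–4. For item 6, take $b\in P_s$. The counters are unchanged there, so $b\notin Q_D\cup Q_I$. Also $b\notin Q_B$, since otherwise item 1 (noting $l_b=l_{b-1}$) gives $\bm{y}_{\setint{\beta_b+l_b+1,\beta_b+l_b+E}}=\bm{c}_b$. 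Hence $b\in Q_E$.

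I expect the main obstacle to be item 1's index bookkeeping. The hard parts are justifying the monotonicity $j'_{b-1}\le j_b$, $k'_{b-1}\le k_b$, and checking the boundary cases where insertions near $\gamma_{b-1}+\tau_i$, counted by $\kappa_{b-1}$, interact with the uniform-shift claim. My first attempt is to show directly that $J\cap\setint{\gamma_{b-1}+\tau_i-k'_{b-1}+j'_{b-1}+1,\gamma_b}=\emptyset$, together with the analogous statement for $K$, from the equalities of counts.
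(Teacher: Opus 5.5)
Your proposal is correct and follows essentially the same route as the paper's proof in the appendix: the marker-preserve window identity $\bm{y}_b=\bm{0}^{(\tau_d+k_b-j_b)}\bm{1}^{(\tau_i-k_b+j_b)}$ compared against $l_{b-1}$ gives the counter statements in items 1--4, the symbol identity in item 1 is the uniform shift forced by $j'_{b-1}=j_b$ and $k'_{b-1}=k_b$, and items 5 and 6 follow from the partition $\{Q_B,Q_E,Q_D,Q_I\}$ exactly as the paper argues. One caution: the ``invariant'' in your first paragraph, that $l_{b-1}=k'_{b-1}-j'_{b-1}$ whenever no erasure is declared at block $b-1$, is false (e.g., with no earlier errors, a $1$ inserted among the ones of the $(b-1)$th marker inside the window makes $k'_{b-1}=1$ while Line~6 still fires and $l_{b-1}=0$); nothing in your argument uses it, since in item 1 the equality $k'_{b-1}-j'_{b-1}=l_{b-1}$ comes directly from the defining conditions of $Q_B$, so just drop it.
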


This lemma is shown in Appendix~\ref{sec:app2}.

\begin{lem}
  \label{lem:nb}
  Define $n_b := | \setint{b} \cap Q_E |$.
  Then,
  \begin{align}
    u_b &\leq j'_b - n_b,
    \label{eq:nb_1}\\
    v_b &\leq k'_b - n_b
    \label{eq:nb_2}
  \end{align}
  hold
  for $b \in \setint{N}$.
\end{lem}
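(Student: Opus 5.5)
We prove \eqref{eq:nb_1} and \eqref{eq:nb_2} simultaneously by induction on $b$, starting from $u_0 = v_0 = 0$, $j'_0 = k'_0 = 0$ and $n_0 = 0$. For the inductive step we assume both inequalities at index $b-1$. We then split according to which of the pairwise disjoint classes $Q_B$, $Q_E$, $Q_D$, $Q_I$ of Lemma~\ref{lem:Q_BEDI} contains $b$. Together these classes cover $\setint{N}$: for $b\in M(P)$ the three relations $k_b-j_b=l_{b-1}$, $k_b-j_b<l_{b-1}$ and $k_b-j_b>l_{b-1}$ are exhaustive, and the equality case divides further into $Q_B$ and $Q_E$.

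Two monotonicity facts underpin the argument. First, since $\setint{\gamma_{b-1}+\tau_i-k'_{b-1}+j'_{b-1}}\subset\setint{\gamma_b}$, the counts of deletions and insertions up to the $b$th marker dominate those up to the tail of the previous window. This should give
\[
j'_{b-1}\le j_b\le j'_b, \qquad k'_{b-1}\le k_b\le k'_b .
\]
Second, we need the case-by-case increments of $u$ and $v$ read off Algorithm~\ref{algo:tran_era}:
\begin{itemize}
\item in Line $8$, $u$ grows by $w=l_{b-1}-(k_b-j_b)$, the mismatch between the observed and expected zero counts in the marker window;
\item in Line $10$, $v$ grows by exactly $1$;
\item otherwise $u$ and $v$ are unchanged.
\end{itemize}

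The cases then go as follows.
\begin{itemize}
\item \emph{Case $Q_B$.} We have $u_b=u_{b-1}$, $v_b=v_{b-1}$ and $n_b=n_{b-1}$, so the bounds follow from monotonicity.
\item \emph{Case $Q_E$.} Here $n_b=n_{b-1}+1$ while $u,v$ are unchanged. We need $j'_b-j'_{b-1}\ge1$ and $k'_b-k'_{b-1}\ge1$, which do not follow from the definition of $Q_E$ alone. The way out is to exploit $k_b-j_b=l_{b-1}=k'_{b-1}-j'_{b-1}$; this equality should follow from the inductive structure, perhaps via a strengthened hypothesis $l_{b-1}=v_{b-1}-u_{b-1}$ tied to the actual offset. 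Given it, $j'_{b-1}<j_b$ forces $k'_{b-1}<k_b$ and vice versa, so both strict increases hold, and with monotonicity both bounds advance by at least one.
\item \emph{Case $Q_D$.} Here $u_b=u_{b-1}+w$ and $v$ is unchanged. We need $j_b\ge j'_{b-1}+w$, obtained from $w=l_{b-1}-k_b+j_b$ together with $l_{b-1}=v_{b-1}-u_{b-1}\le k'_{b-1}-j'_{b-1}$ and $k_b\ge k'_{b-1}$. This last inequality is not implied by the inductive bounds themselves, so the induction hypothesis must be strengthened to track $l_{b-1}$ against $k'_{b-1}-j'_{b-1}$.
\item \emph{Case $Q_I$.} Here $v$ grows by $1$. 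If marker-preserve holds with $k_b-j_b>l_{b-1}$, this gives $k_b>k'_{b-1}$ by the analogous argument. If instead marker-destruction occurs, then an inserted symbol lies inside the marker window, so $k'_b\ge k'_{b-1}+1$ directly. In both subcases $u$ is unchanged.
\end{itemize}

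The main obstacle is the relation between the running offset $l_{b-1}=v_{b-1}-u_{b-1}$ and the true net shift $k'_{b-1}-j'_{b-1}$. The stated inequalities alone do not control $l$, so I would first try strengthening the induction hypothesis to include $l_b\le k'_b-j'_b$, or an exact equality in the marker-preserve cases. I would verify this using the geometry of Fig.~\ref{fig: special_bits}: the window $\bm{y}_b$ ends at $\gamma_b+\tau_i$, and the number of zeros it contains equals $\tau_d-(j_b-k_b)$ under marker-preserve.
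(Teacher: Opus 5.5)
Your induction skeleton, the partition into $Q_B,Q_E,Q_D,Q_I$, the monotonicity chain, and the $Q_B$ and marker-destruction cases all agree with the paper. The remaining cases, however, have a genuine gap, and the strengthening you propose to close it is false.

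\textbf{The gap.} In $Q_E$, in $Q_D$, and in the marker-preserve part of $Q_I$ you need an unconditional comparison between the running offset $l_{b-1}=v_{b-1}-u_{b-1}$ and the true shift $k'_{b-1}-j'_{b-1}$: equality in $Q_E$, $\le$ in $Q_D$, and $\ge$ in $Q_I$. No such invariant holds in either direction.
\begin{enumerate}
\item The offset can lag the true shift, because Line~10 raises $v$ by one however many insertions occurred. Take $t=4$, $t_d=1$, $t_i=2$ (so $\tau_d=1$, $\tau_i=2$), with two insertions inside $\bm{c}_1$ and the deletion in a later block. Block~$1$ goes to Line~10, so $l_1=1<2=k'_1-j'_1$. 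The error-free block~$2$ then lies in $Q_I$ with $k_2=k'_1$, so your claim $k_b>k'_{b-1}$ fails.
\item The offset can also exceed the true shift. Take $t=4$, $t_d=t_i=1$, and delete the first $1$ of the first marker. This gives $u_1=v_1=0$, $k'_1=0$, $j'_1=1$, so $l_1>k'_1-j'_1$, which refutes $l_b\le k'_b-j'_b$. Adding one insertion in $\bm{c}_2$ puts $b=2$ in $Q_E$ with $j'_2=j'_1$. So it is not true that both $j'$ and $k'$ must increase on $Q_E$.
\end{enumerate}
In both examples the lemma survives only because the hypothesis at $b-1$ is slack.

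\textbf{The missing idea.} The two stated inequalities are already a sufficient invariant, once you use $l_{b-1}=v_{b-1}-u_{b-1}$ to bound each counter through the other.
\begin{enumerate}
\item In $Q_D$, use the hypothesis on $v$, not on $u$: $u_b=u_{b-1}+w=v_{b-1}-k_b+j_b\le (k'_{b-1}-n_{b-1})-k_b+j'_b\le j'_b-n_b$.
\item In the marker-preserve part of $Q_I$, use the hypothesis on $u$: $v_{b-1}=l_{b-1}+u_{b-1}<(k_b-j_b)+(j'_{b-1}-n_{b-1})\le k_b-n_b\le k'_b-n_b$. Hence $v_b=v_{b-1}+1\le k'_b-n_b$.
\item In $Q_E$, the paper splits according to which hypothesis inequalities are tight. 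A strict one supplies, by integrality, the extra $1$ needed for $n_b=n_{b-1}+1$. A tight one, say $u_{b-1}=j'_{b-1}-n_{b-1}$, combined with the $v$-bound gives $k_b-j_b=l_{b-1}\le k'_{b-1}-j'_{b-1}$. Then $j_b=j'_{b-1}$ would force $k_b=k'_{b-1}$, i.e.\ $b\in Q_B$; hence $j'_b\ge j_b>j'_{b-1}$. The case of $v$ is symmetric.
\end{enumerate}
Tightness thus provides exactly the one-sided comparison of $l_{b-1}$ with $k'_{b-1}-j'_{b-1}$ that you tried to impose globally. The paper's printed chains contain sign slips (e.g.\ its $Q_I$ case ends with $v_b\le k'_b+n_b$), but the corrected inequalities are the ones above.
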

\begin{IEEEproof}
  We prove this
  by induction on $b$.
  We omit the base case $b = 1$
  because
  it is similar to the inductive step shown below.
  Note that
  $\{ Q_B, Q_E, Q_D, Q_I \}$
  is a partition of $\setint{N}$
  and
  \begin{align*}
    0
    &=
    j'_0
    \leq j_1 \leq j'_1 \leq j_2 \leq j'_2
    \leq \cdots \leq 
    j_N \leq j'_N \leq \tau_d,\\
    0
    &=
    k'_0
    \leq k_1 \leq k'_1 \leq k_2 \leq k'_2
    \leq \cdots \leq 
    k_N \leq k'_N \leq \tau_i.
  \end{align*} 
  Assume that
  $u_{b-1} \leq j'_{b-1} - n_{b-1}$, $v_{b-1} \leq k'_{b-1} - n_{b-1}$.
  Then,
  \begin{enumerate}
  \item (Case~:~$b \in Q_B$)
    Since $b \notin Q_E$,
    we get
    $n_b = |\setint{b} \cap Q_E| = |\setint{b-1} \cap Q_E| =n_{b-1}$.
    Since
    $1)$ in Lemma~\ref{lem:Q_BEDI} yields
    $u_{b-1} = u_b$ and $v_{b-1} = v_b$,
    we have $u_b = u_{b-1} \leq j'_{b-1} - n_{b-1} \leq j'_{b} - n_{b}$.
    Similarly,
    we have Eq.~\eqref{eq:nb_2}.
  \item (Case~:~$b \in Q_E$)
    We get 
    $n_b = |\setint{b} \cap Q_E| = |\setint{b-1} \cap Q_E| + 1 = n_{b-1} + 1 > n_{b-1}$
    from $b \in Q_E$
    and
    $u_{b-1} = u_b$ and $v_{b-1} = v_b$
    from $1)$ in Lemma~\ref{lem:Q_BEDI}.
    \begin{enumerate}
    \item (Case~:~$u_{b-1} = j'_{b-1} - n_{b-1}$ and $v_{b-1} = k'_{b-1} - n_{b-1}$)
      Then,
      from the definitions of $l_b$ and $Q_E$,
      \begin{align}
        k'_{b-1} - j'_{b-1} = v_{b-1} - u_{b-1} = l_{b-1} = k_b - j_b
        \label{eq:QE1}
      \end{align}
      holds.
      Here,
      if $j'_{b-1} = j_b$,
      we get $k'_{b-1} = k_b$ from Eq.~\eqref{eq:QE1}
      and
      $k'_{b-1} < k_b$ from the definition of $Q_E$.
      Hence,
      $j'_{b-1} < j_b \leq j'_b$ holds.
      Similarly,
      $k'_{b-1} < k_b$ holds.
      Then,
      we get
      \begin{align*}
        u_b
        =
        u_{b-1}
        =
        j'_{b-1} - n_{b-1}
        <
        j'_b - n_{b-1}
        <
        j'_b - n_b.   
      \end{align*}
    \item (Case~:~$u_{b-1} < j'_{b-1} - n_{b-1}$ and $v_{b-1} = k'_{b-1} - n_{b-1}$)
      Then,
      $k'_{b-1} - j'_{b-1} < v_{b-1} - u_{b-1} = l_{b-1} = k_b - j_b$ holds
      from the definitions of $l_b$ and $Q_E$.
      Since $j'_{b-1} \leq j_b$,
      we get $k'_{b-1} < k_b < k'_b$.
      Hence,
      \begin{align*}
        v_b
        =
        v_{b-1}
        =
        k'_{b-1} - n_{b-1}
        <
        k'_b - n_{b-1}
        <
        k'_b - n_b.
      \end{align*}
    \item (Case~:~$u_{b-1} \leq j'_{b-1} - n_{b-1}$ and $v_{b-1} < k'_{b-1} - n_{b-1}$)
      Similarly to the case $2$-b),
      we get Eq.~\eqref{eq:nb_1}.
    \end{enumerate}
    Similarly,
    we have Eq.~\eqref{eq:nb_2}.
  \item (Case~:~$b \in Q_D$)
    From $b \notin Q_E$,
    $n_b = n_{b-1}$ holds.
    From Line $8$ in Algorithm~\ref{algo:tran_era},
    for $w = u_b - u_{b-1}$,
    $\bm{y}_b = \bm{0}^{(\tau_d+l_b-w)} \bm{1}^{\tau_i - l_b + w}$ holds.
    Hence,
    from $w = (\gamma_b+l_{b-1}) - (\gamma_b+k_b-j_b)$,
    we get
    \begin{align*}
      u_b
      &=
      u_{b-1} + w
      =
      u_{b-1} + l_{b-1} - k_b + j_b
      =
      v_{b-1} - k_b + j_b\\
      &\leq
      k'_{b-1} - n_{b-1} - k_b + j'_b
      \leq
      j'_b - n_b.
    \end{align*}
    In addition,
    we have
    $v_b = v_{b-1} \leq k'_{b-1} - n_{b-1} \leq k'_b - n_b$.
  \item (Case~:~$b \in Q_I$)
    Then,
    $v_b = v_{b-1} +1$ holds.
    From $b \notin Q_E$,
    $n_b = n_{b-1}$ holds.
    If $b \notin M(P)$,
    since insertions occur in the $b$th marker,
    we get $k'_{b-1} < k'_b$
    and
    \begin{align*}
      v_b -1
      =
      v_{b-1}
      \leq
      k'_{b-1} - n_{b-1}
      =
      k'_{b-1} - n_b
      <
      k'_b - n_b,
    \end{align*}
    i.e.,
    $v_b \leq k'_b - n_b$.
    If $b \in M(P)$,
    we get $k_b - j_b > l_{b-1}$
    from Table~\ref{tab:algo_for_case}.
    Hence,
    we have
    \begin{align*}
      k_b - j_b
      >
      l_{b-1}
      =
      v_{b-1} - u_{b-1}
      \ge
      v_{b-1} - j'_{b-1} - n_{b-1}
    \end{align*}
    and
    \begin{align*}
      v_b-1
      &=
      v_{b-1}
      <
      k_b - ( j_b - j'_{b-1}) + n_{b-1}
      \leq
      k_b + n_{b-1}\\
      &=
      k_b + n_b
      \leq
      k'_b + n_b,
    \end{align*}
    i.e.,
    $v_b \leq k'_b + n_b$.
    In addition,
    we get
    $u_b = u_{b-1} \leq j'_{b-1} - n_{b-1} \leq j'_b - n_b$.
  \end{enumerate}
\end{IEEEproof}

\begin{coro}
  \label{coro:<=t}
  Define
  $m_b := |\setint{b} \cap (Q_D \cup Q_I)|$.
  Then,
  $m_b + 2n_b \leq j'_b + k'_b$
  holds
  for $b \in \setint{N}$.
  In particular,
  if $b = N$,
  $|Q_D \cup Q_I| + 2|Q_E| \leq t$ holds.
\end{coro}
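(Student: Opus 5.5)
The plan is to derive Corollary~\ref{coro:<=t} directly from Lemma~\ref{lem:nb}, using the fact that the algorithm's counters $u$ and $v$ only increase at erased blocks. First I will relate $m_b$ to $u_b$ and $v_b$. By Lemma~\ref{lem:Q_BEDI}, parts 3) and 4), each $b' \in Q_D$ gives $u_{b'-1} < u_{b'}$, and each $b' \in Q_I$ gives $v_{b'-1} < v_{b'}$. Since $u$ and $v$ never decrease, and increments are integers, each such index raises $u_{b'}+v_{b'}$ by at least one. Parts 1) and 2) say that indices in $Q_B\cup Q_E$ leave both counters unchanged. Because $\{Q_B,Q_E,Q_D,Q_I\}$ partitions $\setint{N}$ and $u_0=v_0=0$, telescoping gives
\begin{align*}
  m_b = |\setint{b}\cap(Q_D\cup Q_I)| \leq \sum_{b'=1}^{b}\big[(u_{b'}-u_{b'-1})+(v_{b'}-v_{b'-1})\big] = u_b+v_b .
\end{align*}

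Next I add the two inequalities \eqref{eq:nb_1} and \eqref{eq:nb_2} of Lemma~\ref{lem:nb}. This gives $u_b+v_b \leq j'_b+k'_b-2n_b$. Combined with the bound above, it yields $m_b+2n_b\leq j'_b+k'_b$, which is the first claim.

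For $b=N$, I use the monotone chains recorded in the proof of Lemma~\ref{lem:nb}. They give $j'_N\leq\tau_d$ and $k'_N\leq\tau_i$, so the right-hand side is at most $\tau_d+\tau_i$. To reach $t$, I will argue more carefully that $j'_N\leq t_d$ and $k'_N\leq t_i$, since $j'_N$ counts actual deleted symbols and $k'_N$ counts actual inserted symbols; hence $j'_N+k'_N\leq t_d+t_i\leq t$. If that identification fails, the fallback is the bound $\tau=\lfloor (t-r)/2\rfloor+\lfloor (t+r)/2\rfloor\leq t$, which holds because the two floors sum to at most $t$. Either route gives $|Q_D\cup Q_I|+2|Q_E|=m_N+2n_N\leq t$.

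The main obstacle is the first step. I must make sure a single block in $Q_D\cup Q_I$ is not counted twice, and that no increment of $u$ or $v$ happens outside $Q_D\cup Q_I$. Both follow from the partition together with parts 1)--4) of Lemma~\ref{lem:Q_BEDI}, and in any case the inequality direction $m_b\leq u_b+v_b$ is all that is needed.
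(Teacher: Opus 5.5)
Your proof is correct and follows the paper's own argument: you add \eqref{eq:nb_1} and \eqref{eq:nb_2} from Lemma~\ref{lem:nb} to get $u_b+v_b+2n_b\leq j'_b+k'_b$, and you use that each block in $Q_D\cup Q_I$ raises $u$ or $v$ by at least one, which gives $m_b\leq u_b+v_b$. Your explicit treatment of the case $b=N$, via $j'_N+k'_N\leq t_d+t_i\leq t$ or alternatively $\tau_d+\tau_i\leq t$, supplies a step the paper leaves implicit (and the paper's citation of Lemma~\ref{lem:Q_BEDI} for the first inequality is evidently meant to be Lemma~\ref{lem:nb}, as you used).
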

\begin{IEEEproof}
  Lemma~\ref{lem:Q_BEDI} yields,
  $u_b + v_b + 2 n_b \leq j'_b + k'_b$.
  From the execution of Algorithm~\ref{algo:tran_era}
  for $b \in Q_D \cup Q_I$,
  we get
  $m_b \leq u_b + v_b$.
  Hence,
  we have
  $m_b + 2 n_b \leq j'_b + k'_b$.
\end{IEEEproof}

\begin{IEEEproof}[Proof of Theorem~\ref{theo:main}]
  We get $|P_e| = |Q_D \cup Q_I|$ from $5)$ in Lemma~\ref{lem:Q_BEDI}
  and
  $|P_s| \leq |Q_E|$ from $6)$ in Lemma~\ref{lem:Q_BEDI}.
  Hence,
  from Corollary~\ref{coro:<=t},
  we have
  $|P_e| + 2|P_s| \leq t$.
\end{IEEEproof}

\section{Conclusion}
This paper proposed a decoding algorithm to
quantum $t_d$-deletion errors and quantum $t_i$-insertion errors
for Hagiwara’s codes.
  
\section*{Acknowledgment}
This work was supported by JSPS KAKENHI Grant Number 22K11905.

\appendices
\section{Proofs of Theorem \ref{theo:insel_pau_kai} and Corollary \ref{coro:tran_pau}}
\label{sec:app1}
A quantum channel $\QU{I}_P \circ \QU{D}_P$ ($P \in \subint{n}{\leq n-1}$)
is not necessarily
a unitary transformation.
For example,
the following transformation is not unitary
since
the rank is not preserved.
\begin{align*}
  \frac{1}{\sqrt{2}}\ket{00}
  +
  \frac{1}{\sqrt{2}}\ket{11}
  &\xrightarrow{\text{$2$nd qubit deletion}}
  \frac{1}{2}\ket{0}\bra{0}
  +
  \frac{1}{2}\ket{1}\bra{1}\\
  &\xrightarrow{\text{$2$nd qubit insertion}}
  \frac{1}{2}\ket{00}\bra{00}
  +
  \frac{1}{2}\ket{10}\bra{10}
\end{align*}
Although 
$\QU{I}_P \circ \QU{D}_P$ is not necessarily unitary,
$\QU{I}_P \circ \QU{D}_P$ is denoted by the Pauli errors
as follows:
\begin{theo}
  \label{theo:insel_pau}
  Suppose that
  a quantum channel $\QU{I}_P \circ \QU{D}_P$ ($P \in \subint{n}{\leq n-1}$) changes
  from $\rho \in \state{2}{n}$
  to $\sigma \in \QU{I}_P(\QU{D}_P(\rho))$.
  Then,
  this error is denoted by
  a complex linear combination of the Pauli errors for the qubits in $P$.  
\end{theo}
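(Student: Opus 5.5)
The plan is to show that $\sigma$ is the image of $\rho$ under a completely positive map whose Kraus operators act nontrivially only on the qubits in $P$. Each such Kraus operator is a $2^{|P|}\times 2^{|P|}$ matrix on those qubits, tensored with the identity on the others. Because the tensor products of $\{I,X,Y,Z\}$ on the qubits of $P$ form a basis of all $2^{|P|}\times 2^{|P|}$ complex matrices, expanding each Kraus operator in this basis writes the error as a complex linear combination of Pauli errors on $P$. This is the form that the CSS/quantum RS decoder handles through syndrome measurement and the discretization of errors.

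The first step is to describe $\QU{I}_P(\QU{D}_P(\rho))$ concretely. Let $\rho'=\QU{D}_P(\rho)\in\state{2}{(n-|P|)}$ be the partial trace over the qubits in $P$. Let $\sigma\in\state{2}{n}$ be any state with $\QU{D}_P(\sigma)=\rho'$, where $P$ is again the set of positions, now in $\sigma$. Since the same position set is removed on both sides, after reordering the qubits so that $P$ comes first, both $\rho$ and $\sigma$ are states on $\mathcal{H}_P\otimes\mathcal{H}_{\bar P}$ with the same reduced state $\rho'$ on $\bar P$.

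The second step, which I expect to be the main obstacle, is to turn ``same marginal on $\bar P$'' into an actual operation on $P$ mapping $\rho$ to $\sigma$. This is not automatic: $\sigma$ may be correlated with $\bar P$ in a way that no local map can create. I would first purify both states with a common reference $R$ and use Uhlmann's theorem. Purifications of the same $\rho'$ that share a purifying system $P\otimes R$ are related by a unitary on $P\otimes R$. Applying that unitary and then tracing out $R$ gives a CPTP map $\Phi$ on $P$ alone with $(\Phi\otimes\mathrm{id}_{\bar P})(\rho)=\sigma$. The dimensions must be padded so that the reference is large enough to hold both purifications; choosing $R$ of dimension $2^n$ suffices.

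If this strong form fails in general, I would fall back on the correct formulation: the insertion replaces the content of $P$ by arbitrary new qubits possibly entangled with the environment. Modelling it as a channel acting on $P$ together with the environment, with the environment traced out afterwards, again gives Kraus operators supported on $P$.

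Finally, I would write $\Phi(\cdot)=\sum_k A_k(\cdot)A_k^\dagger$ and expand each $A_k=\sum_{\bm{a}}\lambda_{k,\bm{a}}\mathcal{P}_{\bm{a}}$ over Pauli strings $\mathcal{P}_{\bm{a}}$ on $P$. This yields $\sigma=\sum_k\bigl(\sum_{\bm{a}}\lambda_{k,\bm{a}}\mathcal{P}_{\bm{a}}\bigr)\rho\bigl(\sum_{\bm{a}}\lambda_{k,\bm{a}}\mathcal{P}_{\bm{a}}\bigr)^\dagger$, which is the claimed decomposition into Pauli errors on the qubits of $P$.
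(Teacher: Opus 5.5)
\paragraph{Where the proposal stands.} Your last step, writing the map in operator-sum form and expanding each operator in the Pauli basis on $P$, is the same as the paper's. Your fallback is essentially the paper's argument. The paper posits an environment state $\rho_{\text{env}}$ and a unitary $\mat{U}$ on the $P$-qubits plus environment with $\sigma=\mathrm{tr}_{\text{env}}\bigl((I\otimes\mat{U})(\rho\otimes\rho_{\text{env}})(I\otimes\mat{U}^\dagger)\bigr)$. It then expands $f(\mat{A}_{i,j})=\sum_a\mat{M}_{1,a}\mat{A}_{i,j}\mat{M}_{2,a}$ in Paulis.

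\paragraph{The gap: the Uhlmann step.} Your second step is meant to justify that posited form, and it fails. Let $\ket{\psi_\rho}$ purify $\rho$ on $\bar P\otimes P\otimes R$ and let $V$ be the Uhlmann unitary on $P\otimes R$. Then $\mathrm{tr}_R\bigl[(I\otimes V)\ket{\psi_\rho}\bra{\psi_\rho}(I\otimes V^\dagger)\bigr]=\sigma$ is true. But this is not of the form $(\Phi\otimes\mathrm{id}_{\bar P})(\rho)$. A Stinespring dilation needs the ancilla to start in a fixed state in tensor product with the system. For mixed $\rho$, the register $R$ is necessarily entangled with $\bar P\otimes P$. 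The obstruction you mention yourself is real, and Uhlmann does not remove it.

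\paragraph{A counterexample to the statement itself.} Take $n=2$, $P=\{2\}$, $\rho=\frac12(\ket{00}\bra{00}+\ket{11}\bra{11})$ and $\sigma=\frac12(\ket{00}+\ket{11})(\bra{00}+\bra{11})$. Both have marginal $\frac12 I$ on qubit $1$, so $\sigma\in\QU{I}_P(\QU{D}_P(\rho))$. Every map $X\mapsto\sum_a(I\otimes M_{1,a})X(I\otimes M_{2,a})$ sends $\rho$ to something of the form $\ket{0}\bra{0}\otimes X_0+\ket{1}\bra{1}\otimes X_1$. This holds even without complete positivity. But $\sigma$ contains the nonzero block $\frac12\ket{0}\bra{1}\otimes\ket{0}\bra{1}$. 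So for mixed $\rho$ and arbitrary $\sigma$, the statement cannot be proved by any argument. The paper's proof, which simply posits the environment model, has the same hole.

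\paragraph{How to repair it.} Your Uhlmann idea does work when $\rho=\ket{\psi}\bra{\psi}$ is pure, which is the case of codewords of $\QU{R}$. In that case it supplies the justification the paper omits.
\begin{enumerate}
\item Use $\ket{\psi}\otimes\ket{0}_R$ as the purification of $\rho$, and any purification $\ket{\psi_\sigma}$ of $\sigma$ in $\bar P\otimes P\otimes R$ with $\dim R=2^n$.
\item Both purify $\QU{D}_P(\rho)$ with purifying system $P\otimes R$. Hence $(I\otimes V)(\ket{\psi}\otimes\ket{0}_R)=\ket{\psi_\sigma}$ for some unitary $V$ on $P\otimes R$.
\item Now $\Phi(X)=\mathrm{tr}_R\bigl[V(X\otimes\ket{0}\bra{0})V^\dagger\bigr]$ is a genuine channel on $P$ with $(\Phi\otimes\mathrm{id}_{\bar P})(\rho)=\sigma$.
\item Its Kraus operators $(I\otimes\bra{k}_R)V(I\otimes\ket{0}_R)$ expand in Paulis on $P$, as you planned.
\end{enumerate}
For mixed $\rho$ you must change the hypothesis. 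One option is to assume $\sigma$ is produced by a local interaction, as in your fallback; this is the paper's implicit assumption. The other is to require that $\sigma$ extends to a state on $\bar P\otimes P\otimes R'$ with the same marginal on $\bar P\otimes R'$ as a purification of $\rho$. That reduces to the pure case with $\bar P$ replaced by $\bar P\otimes R'$.
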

\begin{IEEEproof}
  We prove for $P = \{ n \}$.
  If $\QU{D}_{\{ n \}}(\rho)$ is spectrally decomposed as
  $\sum_{ i = 0 }^k \lambda_i \ket{\lambda_i}\bra{\lambda_i}$ ($k := 2^{n-1}$),
  $\rho$ and $\sigma$ are denoted by
  $\rho = \sum_{ i,j = 0 }^k \sqrt{\lambda_i\lambda_j} \ket{\lambda_i}\bra{\lambda_j} \otimes \mat{A}_{i,j}$
  and
  $\sigma = \sum_{ i,j = 0 }^k \sqrt{\lambda_i\lambda_j} \ket{\lambda_i}\bra{\lambda_j} \otimes \mat{B}_{i,j}$ \cite{nakamura2026insertion}.
  Then,
  by the state $\rho_{\text{env}}$ of the environment
  and
  some unitary matrix $\mat{U}$,
  $\sigma$ is denoted as follows:
  \begin{align*}
    \sigma
    &=
    \text{tr}_{\text{env}}
    \left(
    (
    \mat{I}^{(n-1)} \otimes \mat{U}
    )
    (
    \rho \otimes \rho_{\text{env}}
    )
    (
    \mat{I}^{(n-1)} \otimes \mat{U}^\dagger
    )
    \right)\\
    &=
    \textstyle
    \sum_{ i,j = 0 }^k
    \sqrt{ \lambda_i \lambda_j } 
    \ket{\lambda_i}\bra{\lambda_j}
    \otimes
    \text{tr}_{\text{env}}
    (
    \mat{U}
    ( \mat{A}_{i,j} \otimes \rho_{\text{env}} )
    \mat{U}^\dagger
    );
  \end{align*}
  i.e.,
  $\mat{B}_{i,j} = \text{tr}_{\text{env}}( \mat{U} ( \mat{A}_{i,j} \otimes \rho_{\text{env}} ) \mat{U}^\dagger ) =: f(\mat{A}_{i,j})$.
  Since $f$ is linear,
  $f(\mat{A}_{i,j})$ is denoted by
  $f(\mat{A}_{i,j}) = \sum_{a = 1}^u \mat{M}_{1,a} \mat{A}_{i,j} \mat{M}_{2,a}$
  for some $2 \times 2$ matrices $\mat{M}_{1,a}, \mat{M}_{2,a}$
  and some integer $u$
  \cite{watrous2018theory}.
  Hence,
  since $\mat{M}_{1,a}, \mat{M}_{2,a}$ are denoted by
  a complex linear combination of the Pauli errors,
  $\QU{I}_{\{n\}} \circ \QU{D}_{\{n\}}$ is denoted by
  a complex linear combination of the Pauli errors for the $n$th qubit.
  The proof is similar to the above  
  if $P$ is a tail part of $\setint{n}$,
  i.e.
  $P = \setint{i,n}$
  for some $i \in \setint{n}$.
  The proof for any $P$ is given by
  swapping of qubits
  for the results above.
\end{IEEEproof}

$\QU{I}_Q \circ \QU{D}_P$ is denoted by
$\QU{I}_P \circ \QU{D}_P$ and
the permutation $\tau$ in Theorem \ref{theo:insel_pau_kai}.
We give an example.
  
\begin{exam}
  \label{exam:change_error}
  We denote $\rho \in \state{2}{10}$
  by the sequence of qubits $q_1, q_2, ..., q_{10}$.
  Suppose that
  $P = \{ 2,4,6,10 \}$, $Q = \{ 1,6,7,10 \}$,
  and
  $\rho \in \state{2}{10}$ is changed to $\sigma \in \QU{I}_Q (\QU{D}_P(\rho))$
  as shown Fig.~\ref{fig:tikz_cha_R_S},
  where
  we denote the inserted qubit by \textcolor{red}{$q$}.  
  \begin{figure}[t]
    \centering
    \begin{tikzpicture}
      \node at (0,0.75) [left]{$\rho:$};
      \foreach \x in {2,4,6,10} { \node at (0.5*\x,0.5) [above,blue]{$q_{\x}$}; }
      \foreach \x in {1,3,5,7,8,9} { \node at (0.5*\x,0.5) [above]{$q_{\x}$}; }
      \node at (3,0.25) [left]{\tiny $\QU{D}_P$};
      \draw [-latex](3,0.5)--(3,0);
      \foreach \x/\n in {1/1,2/3,3/5,4/7,5/8,6/9}{ \node at (0.5*\x,0)[below]{$q_{\n}$}; }
      \node at (3,-0.75) [left]{\tiny $\QU{I}_Q$};
      \draw [-latex](3,-0.5)--(3,-1);
      \node at (0,-1.25) [left]{$\sigma:$};
      \foreach \x in {1,6,7,10} { \node at (0.5*\x,-1) [below,red]{$q$}; }
      \foreach \x/\n in {2/1,3/3,4/5,5/7,8/8,9/9}{ \node at (0.5*\x,-1)[below]{$q_{\n}$};}
    \end{tikzpicture}
    \caption{Example of a Change by $\QU{I}_Q \circ \QU{D}_P$}
    \label{fig:tikz_cha_R_S}
  \end{figure}
  For the permutation 
  \begin{align*}
    \tau
    &=
    \begin{pmatrix}
      1 & 2 & 3 & 4 & 5 & 6 & 7 & 8 & 9 & 10\\
      2 & 1 & 3 & 5 & 7 & 4 & 6 & 8 & 9 & 10
    \end{pmatrix},    
  \end{align*}
  $\sigma \in \QU{I}_Q (\QU{D}_P(\rho))$
  is a changed state from some $\pi \in \QU{I}_P (\QU{D}_P(\rho))$
  by
  the shifting of $i$th qubit of $\pi$ to $\tau(i)$th,
  as shown Fig.~\ref{fig:tikz_dis_denote}.
  \begin{figure}[t]
    \centering
    \begin{tikzpicture}
      \node at (0,0.75) [left]{$\rho:$};
      \foreach \x in {2,4,6,10} { \node at (0.5*\x,0.5) [above,blue]{$q_{\x}$}; }
      \foreach \x in {1,3,5,7,8,9} { \node at (0.5*\x,0.5) [above]{$q_{\x}$}; }
      \node at (3,0.25) [left]{\tiny $\QU{I}_P \circ \QU{D}_P$};
      \draw [-latex](3,0.5)--(3,0);
      \foreach \x in {2,4,6,10} { \node at (0.5*\x,0) [below,red]{$q$}; }
      \foreach \x in {1,3,5,7,8,9} { \node at (0.5*\x,0) [below]{$q_{\x}$}; }
      \node at (3,-0.75) [left]{\tiny Shifting by $\tau$};
      \draw [-latex](3,-0.5)--(3,-1);
      \node at (0,-1.25) [left]{$\sigma:$};
      \foreach \x in {1,6,7,10} { \node at (0.5*\x,-1) [below,red]{$q$}; }
      \foreach \x/\n in {2/1,3/3,4/5,5/7,8/8,9/9}{ \node at (0.5*\x,-1)[below]{$q_{\n}$};}
    \end{tikzpicture}
    \caption{Example of a Change by $\QU{I}_P \circ \QU{D}_P$ and $\tau$}
    \label{fig:tikz_dis_denote}
  \end{figure}
  By the way,
  given $\rho$ and $\sigma$,
  such $\pi$ is given by
  the shifting of $i$th qubit of $\sigma$ to $\tau^{-1}(i)$th.
\end{exam}

A permutation $\tau$ on $\setint{n}$
is denoted as
\begin{align*}
  \tau = \tau_1 \circ \tau_2 \circ \cdots \tau_k
\end{align*}
by
some subsets $S_i \subset \setint{n}$ and 
some permutations $\tau_i$ on $S_i$.
For example,
$\tau$ in Example~\ref{exam:change_error}
satisfies
\begin{align*}
  \tau
  &=
  \begin{pmatrix}
    1 & 2 & 3 & 4 & 5 & 6 & 7 & 8 & 9 & 10\\
    2 & 1 & 3 & 5 & 7 & 4 & 6 & 8 & 9 & 10
  \end{pmatrix}\\
  &=
  \begin{pmatrix}
    1 & 2\\
    2 & 1
  \end{pmatrix}
  \circ
  \begin{pmatrix}
    4 & 5 & 6 & 7\\
    5 & 7 & 4 & 6
  \end{pmatrix},
\end{align*}
i.e.,
$\tau$ is denoted by $\tau_1 = \begin{pmatrix} 1 & 2\\ 2 & 1 \end{pmatrix}$ on $S_1 = \setint{1,2}$ and $\tau_2 = \begin{pmatrix} 4 & 5 & 6 & 7\\ 5 & 7 & 4 & 6 \end{pmatrix}$ on $S_2 = \setint{4,7}$.
In general,
$\tau$ is decomposed into $\tau_i$ on $S_i$ made in Algorithm~\ref{algo:make_S_i}.
We describe the principle of Algorithm~\ref{algo:make_S_i}.
A permutation $\tau$ on $\setint{n}$ is denoted by
some permutations on $T_j := \setint{p_j,q_j} \cup \setint{q_j,p_j}$
($j \in \setint{t}$, $p_j \neq q_j$).
Hence,
Algorithm~\ref{algo:make_S_i} makes $S_1, S_2, ..., S_k$
by merging of the overlapping sets $T_j, T_{j+1}$
in Lines $10$--$14$.
\begin{figure}[t]
  \begin{algorithm}[H]
    \caption{Making $S_i$}
    \label{algo:make_S_i}
    \begin{algorithmic}[1] 
      \REQUIRE $t \in \PZ$, $P = \{ p_1, p_2, ..., p_{t} \}$, 
      $Q = \{ q_1, q_2, ..., q_{t} \}$
      \ENSURE $k \in \{ 0 \} \cup \PZ$, $\{ S_i \}_{i=1}^k$
      \FOR{$j = 1,2,..., t$}
      \IF{$p_j \neq q_j$}
      \STATE $T_j \gets \setint{p_j,q_j} \cup \setint{q_j,p_j}$
      \ELSE
      \STATE $T_j \gets \emptyset$
      \ENDIF
      \ENDFOR
      \STATE $k \gets 0$
      \FOR{$j = 1,2,..., t-1$}
      \IF{$T_j \cap T_{j+1} \neq \emptyset$}
      \STATE $T_{j+1} \gets T_j \cup T_{j+1}$, $T_j \gets \emptyset$
      \ELSIF{$T_j \neq \emptyset$}
      \STATE $k \gets k+1$, $S_k \gets T_j$
      \ENDIF
      \ENDFOR
      \IF{$T_t \neq \emptyset$}
      \STATE $k \gets k+1$, $S_k \gets T_t$
      \ENDIF
    \end{algorithmic}
  \end{algorithm}
\end{figure}
For example,
for $P, Q$ in Example~\ref{exam:change_error},
Algorithm~\ref{algo:make_S_i} performs as follows:
\begin{enumerate}
\item (Lines $1$--$7$)
  Makes the following sets $T_j$.
  \begin{center}
    \begin{tabular}{c|cccc}
      $j$ & $1$ & $2$ & $3$ & $4$\\
      \hline
      $T_j$ & $\setint{1,2}$  & $\setint{4,6}$  & $\setint{6,7}$  & $\emptyset$ 
    \end{tabular}
  \end{center}
\item (Lines $8$--$15$)
  \begin{enumerate}[label=\textbf{(j=\arabic*)}]
  \item Since $T_1 \cap T_2 = \emptyset$ and $T_1 \neq \emptyset$,
    performs $S_1 \gets T_1 = \setint{1,2}$.
  \item Since $T_2 \cap T_3 \neq \emptyset$,
    performs $T_2 \gets \emptyset$, $T_3 \gets T_2 \cup T_3 = \setint{4,7}$.
  \item Since $T_3 \cap T_4 = \emptyset$ and $T_3 \neq \emptyset$,
    performs $S_2 \gets T_3 = \setint{4,7}$.
  \end{enumerate}
\item (Lines $16$--$18$)
  Since $T_4 = \emptyset$,
  performs nothing.
\end{enumerate}
To summarize,
these outputs are $k = 2$ and
\begin{align*}
  &S_1
  =
  \setint{1,2},
  &&S_2
  =
  \setint{4,7}.
\end{align*}

As shown the following,
$\tau$ is denoted by
a complex linear combination of the Pauli errors for the qubits in $\bigcup_{i=1}^k S_i$
since $\tau_i$ on $S_i$ is denoted by
a unitary transformation for the qubits in $S_i$.
Hence,
$\QU{I}_Q \circ \QU{D}_P$ is denoted by
a complex linear combination of the Pauli errors for the qubits in
$P \cup \bigcup_{i=1}^k S_i$,
since
$\QU{I}_Q \circ \QU{D}_P$ is denoted by
$\QU{I}_P \circ \QU{D}_P$ and $\tau$,
as follows:

\begin{IEEEproof}[Proof of Theorem \ref{theo:insel_pau_kai}]
  For some $\pi \in \QU{I}_P( \QU{D}_P(\rho) )$,
  we get $\sigma \in \QU{I}_Q( \QU{D}_P(\rho) )$
  by
  shifting of the $i$th qubit of $\pi$ to $\tau(i)$th,
  i.e.,
  $\QU{I}_Q \circ \QU{D}_P$ is denoted by
  the composition of 
  $\QU{I}_P \circ \QU{D}_P$ and this shifting.
  This shifting is denoted by 
  \begin{align*}
    \mat{U}_{\tau}
    =
    \sum_{\bm{x} \in \{0,1\}^n}
    \ket{ x_{\tau(1)} x_{\tau(2)} ... x_{\tau(n)} }
    \bra{\bm{x}},
  \end{align*}
  i.e.,
  $\sigma = \mat{U}_{\tau} \pi \mat{U}_{\tau}^\dagger$.
  If $\tau = \tau_{\text{head}} \circ \tau_{\text{tail}}$ holds,
  where
  $\tau_{\text{head}}$ is a permutation on $\setint{i}$
  and
  $\tau_{\text{tail}}$ is a permutation on $\setint{i+1,n}$
  for some $i \in \setint{n}$,
  we get
  \begin{align*}
    \mat{U}_{\tau}
    =
    &\left(
    \sum_{\bm{x}_{\setint{i}} \in \{0,1\}^i}
    \ket{ x_{\tau(1)} x_{\tau(2)} ... x_{\tau(i)} }
    \bra{\bm{x}_{\setint{i}}}
    \right) \otimes \\
    &\left(
    \sum_{\bm{x}_{\setint{i+1,n}} \in \{0,1\}^{n-i}}
    \ket{ x_{\tau(i+1)} x_{\tau(i+2)} ... x_{\tau(n)} }
    \bra{\bm{x}_{\setint{i+1,n}}}
    \right).
  \end{align*}
  In particular,
  if $\tau(i) = i$ holds,
  i.e.,
  $\tau_{\text{head}}$ is a permutation on $\setint{i-1}$
  and
  $\tau_{\text{tail}}$ is a permutation on $\setint{i+1,n}$,
  we get
  \begin{align*}
    \mat{U}_{\tau}
    =
    &\left(
    \sum_{\bm{x}_{\setint{i-1}} \in \{0,1\}^{i-1}}
    \ket{ x_{\tau(1)} x_{\tau(2)} ... x_{\tau(i-1)} }
    \bra{\bm{x}_{\setint{i-1}}}
    \right)\\
    & \otimes 
    \left(
    \sum_{x_i \in \{0,1\}}
    \ket{ x_{\tau(i)} }
    \bra{x_i}
    \right) \otimes \\
    &\left(
    \sum_{\bm{x}_{\setint{i+1,n}} \in \{0,1\}^{n-i}}
    \ket{ x_{\tau(i+1)} x_{\tau(i+2)} ... x_{\tau(n)} }
    \bra{\bm{x}_{\setint{i+1,n}}}
    \right)\\
    =
    &\left(
    \sum_{\bm{x}_{\setint{i-1}} \in \{0,1\}^{i-1}}
    \ket{ x_{\tau(1)} x_{\tau(2)} ... x_{\tau(i-1)} }
    \bra{\bm{x}_{\setint{i-1}}}
    \right) \otimes \mat{I} \otimes \\
    &\left(
    \sum_{\bm{x}_{\setint{i+1,n}} \in \{0,1\}^{n-i}}
    \ket{ x_{\tau(i+1)} x_{\tau(i+2)} ... x_{\tau(n)} }
    \bra{\bm{x}_{\setint{i+1,n}}}
    \right),
  \end{align*}
  where $\mat{I}$ is the $2 \times 2$ identity matrix.
  Similarly,
  since $\tau = \tau_1 \circ \tau_2 \circ \cdots \circ \tau_k$,
  $\mat{U}_{\tau}$ is denoted by
  a complex linear combination of the Pauli errors for the qubits in $\bigcup_{i=1}^k S_i$.
  From Theorem~\ref{theo:insel_pau},
  $\QU{I}_Q \circ \QU{D}_P$ is denoted by
  a complex linear combination of the Pauli errors for the qubits in
  $P \cup \bigcup_{i=1}^k S_i$
  since 
  $\QU{I}_Q \circ \QU{D}_P$ equals
  the composition of 
  $\QU{I}_P \circ \QU{D}_P$ and $\mat{U}_{\tau}$.
\end{IEEEproof}

\begin{IEEEproof}[Proof of Corollary \ref{coro:tran_pau}]
  In a block-transformation error for $R_b$,
  $P$ is a subset of range of $R_b$
  and
  $\tau$ is a permutation on the range.
  Hence,
  from Theorem \ref{theo:insel_pau_kai},
  a block-transformation error for $R_b$
  is denoted by
  a complex linear combination of the Pauli errors for the qubits in $R_b$.
\end{IEEEproof}

\section{Proof of Lemma \ref{lem:Q_BEDI}}
\label{sec:app2}
\begin{enumerate}
\item 
  Recall that
  $\bm{y}_{\setint{\gamma_b - j_b + k_b}}$ is given
  by $j_b$-deletions and $k_b$-insertions to $\bm{c}_{\setint{\gamma_b}}$
  and
  $\bm{y}_{\setint{\gamma_b+\tau_i}}$ is given
  by $j'_b$-deletions and $k'_b$-insertions to $\bm{c}_{\setint{\gamma_b + \tau_i - k'_b + j'_b}}$.
  From $j_{b-1}' = j_b$ and $k'_{b-1} = k_b$,
  no insertions or deletions have occurred in $\bm{c}_b$ and the $b$th marker.
  Hence,
  for all $i \in \setint{E}$,
  the $(\beta_b+i)$th symbol in $\bm{c}$
  is shifted
  to the $(\beta_b+k'_{b-1}-j_{b-1}'+i)$th symbol in $\bm{y}$.
  We get $k'_{b-1}-j_{b-1}' = l_{b-1}$
  since $b \in Q_B$.
  From the above,
  $y_{\beta_b+l_{b-1}+i} = c_{\beta_b+i}$ holds for all $i \in \setint{E}$.
  
  Since 
  marker-preserve occurs in the $b$th marker,
  the $b$th marker $\bm{c}_{\setint{\gamma_b-t+1,\gamma_b+t}}$
  is shifted
  to
  $\bm{y}_{\setint{\gamma_b-t+1-l_{b-1},\gamma_b+t-l_{b-1}}}$.
  Hence,
  the symbols in $\bm{y}_{\setint{\gamma_b-t+1-l_{b-1},\gamma_b-l_{b-1}}}$
  are zeros
  and
  the symbols in $\bm{y}_{\setint{\gamma_b-l_{b-1}+1,\gamma_b+t-l_{b-1}}}$
  are ones,
  i.e.,
  \begin{align*}
    \bm{y}_{\setint{\gamma_b-\tau_d+1,\gamma_b+\tau_i}}
    &=
    \bm{0}^{(\tau_d+l_{b-1})} \bm{1}^{(\tau_i-l_{b-1})}
    =
    \bm{0}^{(\tau_d+l_b)} \bm{1}^{(\tau_i-l_b)}.
  \end{align*}
  Therefore,
  Algorithm~\ref{algo:tran_era}
  does not change the values of $u$ and $v$,
  i.e.,
  $u_b = u_{b-1}$
  and
  $v_b = v_{b-1}$.
\item
  Similarly to $(1)$.
\item
  From $k_b - j_b < l_{b-1}$,
  deletions have occurred in $\bm{c}_b$ and the $b$th marker.    
  Since 
  marker-preserve occurs in the $b$th marker,
  the format of the $b$th marker is preserved
  and
  the $b$th marker is shifted to the left.
  Hence,
  $\bm{y}_{\setint{\gamma_b-\tau_d+1,\gamma_b+\tau_i}} = \bm{0}^{(\tau_d+l-w)} \bm{1}^{(\tau_i-l+w)}$
  holds
  for some $w \in \PZ$
  and
  Algorithm~\ref{algo:tran_era} changes the value of $u$,
  i.e.,
  $u_{b-1} < u_b$.
\item
  If $b \in M(P)$,
  this is shown similarly to $(3)$.
  If $b \notin M(P)$,
  the format of the $b$th marker is not preserved.
  In other words,
  $\bm{y}_{\setint{\gamma_b-\tau_d+1,\gamma_b+\tau_i}} = \bm{0}^{(\tau_d+l_b)} \bm{1}^{(\tau_i-l_b)}, \bm{0}^{(\tau_d+l-w)} \bm{1}^{(\tau_i-l+w)}$
  does not hold.
  Hence,
  Algorithm~\ref{algo:tran_era} changes the value of $v$,
  i.e.,
  $v_{b-1} < v_b$.
\item
  Recall that
  $\{ Q_B, Q_E, Q_D, Q_I \}$
  is a partition of $\setint{N}$.
  From $1)$--$4)$ in Lemma \ref{lem:Q_BEDI},
  we get
  \begin{align*}
    b \in Q_D \cup Q_I
    &\iff [u_{b-1} < u_b] \lor [v_{b-1} < v_b]\\
    &\iff b \in P_e.
  \end{align*}
\item
  From $1)$--$4)$ in Lemma \ref{lem:Q_BEDI},
  $P_s \subset Q_B \cup Q_E$ holds
  since $b \in P_s$
  satisfies
  $u_{b-1} = u_b$ and $v_{b-1} = v_b$.
  If $b \in Q_B$,
  $y_{\beta_b+l_{b-1}+i} = y_{\beta_b+l_b+i} = c_{\beta_b+i}$ holds ($i \in \setint{E}$)
  from $1)$ in Lemma \ref{lem:Q_BEDI},
  i.e.,
  $\bm{y}_{ \setint{ \beta_b+l_b+1, \beta_b+l_b+E } } = \bm{c}_b$.
  Since
  $b \in P_s$
  satisfies
  $\bm{y}_{ \setint{ \beta_b+l_b+1, \beta_b+l_b+E } } \neq \bm{c}_b$,
  $P_s \cap Q_B = \emptyset$ holds,
  i.e.,
  $P_s \subset Q_E$.
  \hfill$\blacksquare$
\end{enumerate}

\bibliographystyle{IEEEtran}
\bibliography{IEEEabrv,myref}

@article{hagiwara2023quantum,
  title={Quantum Deletion Codes Derived From Quantum Reed-Solomon Codes},
  author={Hagiwara, Manabu},
  journal={arXiv preprint arXiv:2306.13399},
  year={2023}
}

@article{hagiwara2025quantum,
  title={Quantum Multi Deletion Codes Derived from Quantum Reed-Solomon Codes},
  author={Hagiwara, Manabu},
  journal={2025 IEEE International Symposium on Information Theory (ISIT)},
  pages={1--6},
  year={2025},
  organization={IEEE}
}

@article{grassl1999quantum,
  title={Quantum Reed-Solomon codes},
  author={Grassl, Markus and Geiselmann, Willi and Beth, Thomas},
  journal={International Symposium on Applied Algebra, Algebraic Algorithms, and Error-Correcting Codes},
  pages={231--244},
  year={1999},
  organization={Springer}
}

@article{levenshtein1966binary,
  author  = {Levenshtein, Vladimir I.},
  title   = {Binary codes capable of correcting deletions, insertions, and reversals},
  journal = {Soviet Physics Doklady},
  volume  = {10},
  pages   = {707--710},
  year    = {1966}
}

@article{calderbank1996good,
  title={Good quantum error-correcting codes exist},
  author={Calderbank, A Robert and Shor, Peter W},
  journal={Physical Review A},
  volume={54},
  number={2},
  pages={1098},
  year={1996},
  publisher={APS}
}

@article{steane1996multiple,
  title={Multiple-particle interference and quantum error correction},
  author={Steane, Andrew},
  journal={Proceedings of the Royal Society of London. Series A: Mathematical, Physical and Engineering Sciences},
  volume={452},
  number={1954},
  pages={2551--2577},
  year={1996},
  publisher={The Royal Society London}
}

@article{leahy2019quantum,
  title={Quantum insertion-deletion channels},
  author={Leahy, Janet and Touchette, Dave and Yao, Penghui},
  journal={arXiv preprint arXiv:1901.00984},
  year={2019}
}

@article{nakayama2020first,
  title={The first quantum error-correcting code for single deletion errors},
  author={Nakayama, Ayumu and Hagiwara, Manabu},
  journal={IEICE Communications Express},
  volume={9},
  number={4},
  pages={100--104},
  year={2020},
  publisher={The Institute of Electronics, Information and Communication Engineers}
}

@article{grassl1997codes,
  title={Codes for the quantum erasure channel},
  author={Grassl, Markus and Beth, Th and Pellizzari, Thomas},
  journal={Physical Review A},
  volume={56},
  number={1},
  pages={33},
  year={1997},
  publisher={APS}
}

@article{it2025-3,
	title     = {Single-Insertion Plus Single-Deletion Correcting Algorithm for Quantum Deletion Correcting Codes Based on Quantum Reed-Solomon Codes},
	author    = {Koki Sasaki and Takayuki Nozaki},
	journal = {IEICE Tech. Rep.},
	volume    = {125},
	number    = {37},
	pages     = {13-18},
	year      = {2025}
}

@article{nakamura2026insertion,
  title={Insertion Correcting Capability for Quantum Deletion-Correcting Codes},
  author={Nakamura, Ken and Nozaki, Takayuki},
  journal={arXiv preprint arXiv:2602.20635},
  year={2026}
}

@book{watrous2018theory,
  title={The Theory of Quantum Information},
  author={Watrous, John},
  year={2018},
  publisher={Cambridge university press}
}

@article{nakamura2025multiple,
  title={Multiple-Insertion-Correcting Non-Binary Quantum Codes and Decoding Algorithm},
  author={Nakamura, Ken and Nozaki, Takayuki},
  journal={IEICE Transactions on Fundamentals of Electronics, Communications and Computer Sciences},
  volume={108},
  number={2},
  pages={123--128},
  year={2025},
  publisher={The Institute of Electronics, Information and Communication Engineers}
}

@article{nakamura2024decoding,
  title={Decoding Algorithm Correcting Single-Insertion Plus Single-Deletion for Non-binary Quantum Codes},
  author={Nakamura, Ken and Nozaki, Takayuki},
  journal={2024 International Symposium on Information Theory and Its Applications (ISITA)},
  pages={86--91},
  year={2024},
  organization={IEEE}
}

@article{gottesman2005quantum,
  title={Quantum error correction and fault-tolerance},
  author={Gottesman, Daniel},
  journal={Quantum Information Processing: From Theory to Experiment},
  volume={199},
  pages={159},
  year={2006},
  publisher={IOS Press}
}

@article{matsumoto2022constructions,
  title={Constructions of $l$-adic $t$-deletion-correcting quantum codes},
  author={Matsumoto, Ryutaroh and Hagiwara, Manabu},
  journal={IEICE Transactions on Fundamentals of Electronics, Communications and Computer Sciences},
  volume={105},
  number={3},
  pages={571--575},
  year={2022},
  publisher={The Institute of Electronics, Information and Communication Engineers}
}

@article{DBLP:conf/isita/SasakiN24,
  author       = {Koki Sasaki and
                  Takayuki Nozaki},
title = {Insertion Correcting Algorithm for Quantum Deletion Correcting Codes
                  Based on Quantum Reed-Solomon Codes},
  journal    = {2024 International Symposium on Information Theory and Its Applications (ISITA)},
  pages        = {92--97},
  year         = {2024}
}

@book{BB04489264,
  author    = "Nielsen, Michael A. and Chuang, Isaac L.",
  title     = "Quantum computation and quantum information",
  publisher = "Cambridge University Press",
  year      = "2010",
}

@article{hagiwara2020four,
  title={A four-qubits code that is a quantum deletion error-correcting code with the optimal length},
  author={Hagiwara, Manabu and Nakayama, Ayumu},
  journal={2020 IEEE International Symposium on Information Theory (ISIT)},
  pages={1870--1874},
  year={2020},
  organization={IEEE}
}

@article{hagiwara2021four,
  title={The four qubits deletion code is the first quantum insertion code},
  author={Hagiwara, Manabu},
  journal={IEICE Communications Express},
  volume={10},
  number={5},
  pages={243--247},
  year={2021},
  publisher={The Institute of Electronics, Information and Communication Engineers}
}
\end{document}